\begin{document}
\newcommand{\be}{\begin{equation}}
\newcommand{\ee}{\end{equation}}
\newcommand{\br}{{\mbox{\boldmath{$r$}}}}
\newcommand{\bp}{{\mbox{\boldmath{$p$}}}}
\newcommand*{\QEDA}{\hfill\ensuremath{\square}}  %×Ô¶šÒå£¬ÊµÐÄ
\newcommand{\bn}{{\mbox{\boldmath{$n$}}}}
\newcommand{\balfa}{{\mbox{\boldmath{$\alpha$}}}}
\newcommand{\ba}{\mbox{\boldmath{$a $}}}
\newcommand{\bta}{\mbox{\boldmath{$\beta $}}}
\newcommand{\bg}{\mbox{\boldmath{$g $}}}
\newcommand{\bPsi}{\mbox{\boldmath{$\Psi $}}}
\newcommand{\bsigma}{\mbox{\boldmath{ $\Sigma $}}}
\newcommand{\bpi}{\mbox{\boldmath{$\pi $}}}
\newcommand{\bPi}{\mbox{\boldmath{$\Pi $}}}
\newcommand{\bGamma}{{\bf \Gamma }}
\newcommand{\bA}{{\bf A }}
\newcommand{\bII}{{\bf I }}
\newcommand{\bP}{{\bf P }}
\newcommand{\bX}{{\bf X }}
\newcommand{\bI}{{\bf I }}
\newcommand{\bR}{{\bf R }}
\newcommand{\bff}{{\mathbf{f}}}
\newcommand{\bZ}{{\bf Z }}
\newcommand{\bz}{{\bf z }}
\newcommand{\bx}{{\mathbf{x}}}
\newcommand{\bM}{{\bf M}}
\newcommand{\bW}{{\bf W}}
\newcommand{\bU}{{\bf U}}
\newcommand{\bD}{{\bf D}}
\newcommand{\bJ}{{\bf J}}
\newcommand{\bH}{{\bf H}}
\newcommand{\bK}{{\bf K}}
\newcommand{\bm}{{\bf m}}
\newcommand{\bN}{{\bf N}}
\newcommand{\bC}{{\bf C}}
\newcommand{\bL}{{\bf L}}
\newcommand{\bF}{{\bf F}}
\newcommand{\bv}{{\bf v}}
\newcommand{\bSigma}{{\bf \Sigma}}
\newcommand{\bS}{{\bf S}}
\newcommand{\bs}{{\bf s}}
\newcommand{\bO}{{\bf O}}
\newcommand{\bQ}{{\bf Q}}
\newcommand{\bY}{{\bf Y}}
\newcommand{\by}{{\bf y}}
\newcommand{\btr}{{\mbox{\boldmath{$tr$}}}}
\newcommand{\bNSCM}{{\bf NSCM}}
\newcommand{\bpsi}{\mbox{\boldmath{$\psi $}}}
\newcommand{\barg}{{\bf arg}}
\newcommand{\bmax}{{\bf max}}
\newcommand{\test}{\mbox{$
\begin{array}{c}
\stackrel{ \stackrel{\textstyle H_1}{\textstyle >} } { \stackrel{\textstyle <}{\textstyle H_0} }
\end{array}
$}}
\newcommand{\tabincell}[2]{\begin{tabular}{@{}#1@{}}#2\end{tabular}}
\newtheorem{Def}{Definition}
\newtheorem{Pro}{Proposition}
\newtheorem{Rem}{Remark}
\newtheorem{Lam}{Lemma}
\theoremstyle{remark}\newtheorem{Exa}{Example}
\theoremstyle{plain}\newtheorem{Cor}{Corollary}
\newcommand{\proproof}[1]{\noindent\textbf{Proof. } See Appendix #1.}

\title{Robust Distributed Fusion with Labeled Random Finite Sets}% 
\author{Suqi Li, Wei Yi*, Reza Hoseinnezhad,
Giorgio Battistelli, Bailu Wang, Lingjiang Kong
\thanks{This work was supported
by the Chang Jiang Scholars Program, the National Natural Science Foundation of China under Grants 61501505 and 61771110, the Fundamental Research Funds of Central Universities under Grants ZYGX2016J031, the Chinese Postdoctoral Science Foundation under Grant 2014M550465 and Special Grant 2016T90845,  and the Australian Research Council (ARC) through the Linkage Project Grant LP160101081. (\textit{Corresponding author: Wei Yi.})

S. Li, B. Wang, W. Yi,  and L. Kong  are with the School of  Electronic Engineering, University of Electronic Science and Technology of China, Chengdu 611731, China (Email: qi\_qi\_zhu1210@163.com;  kussoyi@gmail.com; w\_b\_l3020@163.com; lingjiang.kong@gmail.com).

R. Hoseinnezhad is with the School of Aerospace, Mechanical and Manufacturing Engineering, RMIT University, Victoria 3083, Australia (Email: reza.hoseinnezhad@rmit.edu.au). 

G. Battistelli is  with the Dipartimento di Ingegneria dell' Informazione (DINFO), Universit$\grave{\mbox{a}}$ degli Studi di Firenze, Via Santa Marta 3, 50139, Firenze, Italy (Email: giorgio.battistelli@unifi.it).  

%This paper has supplementary downloadable material available at http://ieeexplore.ieee.org., provided by the author. The material includes mathematical proofs of Propositions 2, 3 and 4. Contact kussoyi@gmail.com for further questions about this work.
}
}

\maketitle

%(\emph{Wei Yi is the corresponding author, Email:  kussoyi@gmail.com}).

\begin{abstract}
This paper considers the problem of the distributed fusion of multi-object posteriors in the labeled random finite set filtering framework, using Generalized Covariance Intersection (GCI) method. Our analysis shows that  GCI fusion with  labeled multi-object densities strongly relies on label consistencies between local multi-object posteriors at different sensor nodes, and hence suffers from a severe performance  degradation when perfect label consistencies  are violated. Moreover,  we  mathematically analyze this phenomenon  from the perspective of Principle of Minimum Discrimination Information and the so called yes-object probability.   Inspired by the analysis, we propose a novel and general  solution for the distributed fusion with labeled multi-object densities that is robust to label inconsistencies between sensors. Specifically, the labeled multi-object posteriors are firstly marginalized  to their unlabeled posteriors  which  are then fused using  GCI method. We also introduce a principled method to construct the labeled fused density and produce tracks 
formally. Based on the developed theoretical framework, we present tractable algorithms for the family of generalized labeled multi-Bernoulli (GLMB) filters including $\delta$-GLMB, marginalized $\delta$-GLMB and labeled multi-Bernoulli  filters.
%The proposed solution makes it possible to derive robust distributed fusion algorithms for labeled multi-object densities belonging to the generalized labeled multi-Bernoulli (GLMB) density family, and hence including
%$\delta$-GLMB, marginalized $\delta$-GLMB, labeled multi-Bernoulli (LMB) filters. 
%is performed fully label-wise
The robustness and efficiency of the proposed distributed fusion algorithm are demonstrated  in  challenging tracking scenarios via numerical experiments.
\end{abstract}

\section{Introduction}
\IEEEPARstart{D}{istributed} multi-sensor multi-object tracking (DMMT) solutions are generally designed to benefit from lower communication cost and higher fault tolerance than centralized methods. Devising  DMMT solutions becomes particularly challenging when the correlations between the estimates from different sensors are not known.  The optimal solution to this problem was developed in \cite{CY-Chong}, but the computational cost of calculating the common information can make the solution intractable in many real-world applications.  A number of suboptimal solutions with demonstrated tractability have been formulated based on  the Generalized Covariance Intersection (GCI)~\cite{Mahler-1,Hurley} or  the exponential mixture densities (EMD)~\cite{EMD-Julier,Clark,Uney-2} or the Kullback-Leibler average (KLA)~\cite{Battistelli,double-counting}. 

The GCI fusion rule is an extension of the Covariance Intersection method  which only utilizes the mean and covariance and is limited to Gaussian posteriors~\cite{Uhlmann}. The GCI fusion rule relaxes the Gaussian constraint, and can be used to fuse multi-object distributions with completely unknown correlations, since it intrinsically avoids any double counting of common information. Furthermore, the GCI can be computed in a distributed way by means of suitable consensus algorithms \cite{double-counting, Battistelli}. Finally, from an information-theoretic point of view, GCI fusion rule admits a meaningful interpretation that the fused density is   the centroid of the local posteriors  with Kullback-Leibler divergence considered as the distance.
 
%
%as it minimizes a combined distance (defined in terms of Kullback-Leibler divergences) from the fused density to the centroid of the local posteriors that are fused. 

Based on the GCI fusion rule and  its variants, several  DMMT algorithms have been proposed in the literature. Specifically, the distributed  fusion algorithms for probability hypothesis density~(PHD)~\cite{PHD-Vo,Ristic_PHD}, cardinalized PHD (CPHD)~\cite{Franken, Vo-CPHD} and multi-Bernoulli~(MB)~\cite{MeMBer_Vo2,MeMber_Vo3,Reza_ground_target,Reza_visual_tracking} filters  have  been explored in~\cite{Mahler-1,Clark,Uney-2, double-counting, Battistelli, Mehmet, GCI-MB,Li-tian-cheng, gmb-second}. 
The aforementioned methods are multi-object filters not \textit{trackers} in the sense that object states are estimated without labels. 
%In addition, all of those filters assume particular multi-object distributions (e.g., Poisson, i.i.d. clusters and multi-Bernoulli RFSs) that are not conjugate priors with the standard multi-object likelihood function~\cite{book_mahler}. Therefore, the first or second order moment approximation or other approximations are used in devising the Bayesian update  of the aforementioned multi-object filters.
% (they are indistinguishable)

Recently, in a series of works, the notion of  labeled random finite set (RFS) was introduced to address object trajectories and their uniqueness~\cite{GLMB,LMB_Vo2, delta_GLMB, LRFS_Bear_Vo, Fantacci-BN, GLMB_Papi_Vo,LMO-GOM}. Vo \textit{et al.}~\cite{GLMB,LMB_Vo2} proposed a particular class of labeled multi-object densities called generalized labeled multi-Bernoulli (GLMB) densities.\footnote{GLMB distribution was also named as Vo-Vo distribution by Mahler in his book~\cite{refr:tracking-2}.}  The class of  GLMB densities are  conjugate priors  with respect to the standard multi-object likelihood  and also closed under the Chapman-Kolmogorov equation in Bayesian inference. A variant of the GLMB filter called  the $\delta$-GLMB filter  can be used to multi-object tracking directly, and  not only produces 
trajectories formally but also outperforms the aforementioned filters \cite{GLMB}. Two computationally-efficient approximations of the $\delta$-GLMB filter, i.e.,  the labeled multi-Bernoulli (LMB) filter~\cite{delta_GLMB} and the marginalized $\delta$-GLMB (M$\delta$-GLMB) filter~\cite{Fantacci-BN},  have  also been  
developed.

The enhanced accuracy and superior tracking capability that are inherent in the labeled random finite set filters, have motivated the development of \textit{distributed fusion} methods that work in tandem with these multi-object tracking algorithms. Fantacci~\textit{et~al.}~\cite{Fantacci-BT} were among the first who investigated the distributed fusion of labeled multi-object densities and derived closed-form solutions for GCI fusion with M$\delta$-GLMB and  LMB densities, based on the assumption that different sensors share the same label space for the birth process. However, their work does not explore the  profound meaning of ``sharing the same label space'' and does not address the conditions implied by the assumption in practice.  

In this paper, we further investigate the distributed fusion for labeled random finite set filters. Our major contributions are as follow:
%\begin{enumerate}[i)]
%\item

\textit{i) We analyse the drawback of GCI fusion with labeled multi-object densities  by showing that the fusion performance is highly sensitive to label inconsistencies between sensor nodes. The analysis is carried out  in a principled theoretical framework by virtue of the following proposed notions:}

 -- GCI divergence,  which is  a new measure of discrepancy compatible with GCI fusion rule to quantify the degree of similarity between multiple densities. 
 
-- Conditional multi-label distribution,   which facilitates a new decomposition of the labeled multi-object density. 

-- Label inconsistency indicator, which  quantifies the inconsistency of label information embedded in multiple labeled multi-object densities.

%\item

 \textit{ii) Motivated by the aforementioned performance analysis, we propose a novel and general  solution to the distributed fusion with labeled set filters that is immune to the effect of label inconsistencies between sensor nodes.   Based on the developed theoretical framework, we also  present tractable distributed fusion algorithms for the family of GLMB filters including the $\delta$-GLMB, M$\delta$-GLMB and LMB filters}.

Extensive numerical experiments verify the robustness and effectiveness of the proposed fusion algorithm with Gaussian mixture implementation in challenging tracking scenarios. 

Preliminary results have been announced in the conference
paper \cite{GMB-fusion}. This paper presents a more complete theoretical and
numerical study. The layout of this paper is as follows. The background and notations are presented in Section \ref{chp:2}, followed by drawbacks of GCI fusion with labeled densities  discussed in Section \ref{chp:3}. Section \ref{chp:4} provides a mathematical analysis of the performance degradation of GCI fusion with labeled densities. Section \ref{chp:5} proposes a robust solution to distributed 
fusion with labeled set filters and presents tractable algorithms for the family of GLMB filters.  
Section \ref{chp:6} demonstrates the performance of the proposed algorithms via numerical examples. Conclusions are presented in Section \ref{chp:7}.\section{Notations and Background}\label{chp:2}
\subsection{Notations}
We adopt the convention that single-object states are denoted by lowercase letter ``x'', e.g.  $x,\bx$, while the multi-object states are denoted by capital letter ``X'', e.g.  $X,\bX$.  To distinguish labeled states and distributions from the unlabeled ones, bold face letters are adopted for the labeled ones, e.g.  $\bx$, $\bX$, $\bpi$. Observations generated by single-object states are denoted by $z$, and the multi-object observations are denoted by $Z$. Moreover, blackboard bold letters represent spaces, e.g.  the state space is represented by $\mathbb{X}$, the label space by $\mathbb{L}$, and the observation space  by $\mathbb{Z}$. The collection of all finite subsets of $\mathbb{X}$ is denoted by $\mathcal{F}(\mathbb{X})$. The number of elements in a set is called its cardinality, and denoted by $| \cdot |$. The set of all finite subsets of $\mathbb{X}$ with cardinality $n$ is denoted by $\mathcal{F}_n(\mathbb{X})$. 

We use the multi-object exponential notation
\begin{equation}\label{multi-object exponential notation }
  h^{X}\triangleq{\prod}_{x\in X}h(x)
\end{equation}
for any set $X$ and real-valued function $h$, with $h^\emptyset=1$ by convention. 
%To admit arbitrary arguments like sets, vectors and integers, the generalized Kronecker delta function is used, given by
%\begin{equation}\label{delta}
%  \delta_{Y}(X)\triangleq\left\{\begin{array}{l}
%\!\!1,\,\,\,\, \mbox{if $X =  Y$} \\
%\!\!0,\,\,\,\, \mbox{otherwise}
%\end{array}\right.
%\end{equation}
The inclusion function is given by
\begin{equation}\label{inclusion function}
  1_{Y}(X)\triangleq\left\{\begin{array}{l}
\!\!1, \,\,\,\,\mbox{if $X \subseteq Y$}\\
\!\!0 \,\,\,\,\mbox{otherwise}.
\end{array}\right.
\end{equation}
If $X$ is a singleton, i.e., $X=\{x\}$, the notation $1_Y(x)$ is used instead of $1_Y(\{x\})$.
%\textcolor{red}{If $\bX$ is a singleton, i.e.  $\bX=\{\bx\}$, the notation $1_\bY(x)$ is used instead of $1_\bY(\{\bx\})$.}
\subsection{Labeled Random Finite Set Distributions and Filters}
%\begin{Def}
%Given a  multi-object density $\pi$ on $\mathbb{X}$, and any positive integer n, we define the cardinality probability mass function that $X$ has $n\geqslant 0$ elements by 
%\begin{equation}
%p(n)=\int\pi(\{x_1, \cdots, x_n\})d(x_1, \cdots, x_n)
%\end{equation}
%and   the joint probability density   over $\mathbb{X}^n$ conditional on $X$ have $n$ elements by
%\begin{equation}\label{unlabeled-RFS}
%f(x_1,\cdots,x_n|n)=\frac{\pi(\{x_1,\cdots,x_n\})}{n!p(n)}
%\end{equation}
%Consequently, the multi-object density can be expressed as
%\begin{equation}\label{unlabeled-RFS}
%\begin{split}
%&\pi(\{x_1,\cdots, x_n\})=n!p(n)f(\{(x_1,\cdots,x_n\}|n)
%\end{split}
%\end{equation}
%\end{Def}
Let $\mathcal{L}:\mathbb{X}\mathcal{%
\times }\mathbb{L}\rightarrow \mathbb{L}$ be the projection defined by $%
\mathcal{L}((x,\ell ))=\ell $, then $\mathcal{L}(\mathbf{x})$ is called the
label of the point $\mathbf{x}\in \mathbb{X}\mathcal{\times }\mathbb{L}$. A finite subset $\mathbf{X}$ of $\mathbb{X}\mathcal{\times }\mathbb{L}$
is said to have \emph{distinct labels} if and only if $\mathbf{X}$ and its
labels $\mathcal{L}(\mathbf{X})\triangleq\{\mathcal{L}(\mathbf{x}):\mathbf{x}\in
\mathbf{X}\}$ have the same cardinality. We define the \emph{distinct label
indicator }of $\mathbf{X}$ as $\Delta (\mathbf{X})= \delta _{|%
\mathbf{X}|}(|\mathcal{L(}\mathbf{X})|)$.
\begin{Def}
Given a labeled multi-object density $\bpi$ on $\mathbb{X}\times\mathbb{L}$, and any positive integer $n$, we define the joint existence probability of the label set $\{\ell_1,\cdots,\ell_n\} \subseteq\mathbb{L}$ by
\begin{equation}\notag
w(\{\ell_1,\cdots,\ell_n\})\triangleq\int\bpi(\{(x_1, \ell_1), \cdots, (x_n,\ell_n)\})d(x_1, \cdots, x_n)
\end{equation}
and the  joint probability density on $\mathbb{X}^n$ conditional on their corresponding labels $\ell_1,\cdots,\ell_n$ by
\begin{equation}\notag
p(\{(x_1,\ell_1),\cdots,(x_n,\ell_n)\})\triangleq\frac{ \bpi(\{(x_1, \ell_1), \cdots, (x_n,\ell_n)\})}{w(\{\ell_1,\cdots,\ell_n\})}
\end{equation}
For $n=0$, we define $w(\emptyset)\triangleq\bpi(\emptyset)$ and $p(\emptyset)\triangleq1$. It is implicit that $p(\bX)$ is defined to be zero whenever $w(\mathcal{L}(\bX))$ is zero. 
\end{Def}
Definition 1 is first provided in \cite{refr:label_6}, and using Definition 1, the labeled multi-object density  can be decomposed as 
\begin{align}\label{labeled-RFS}
\bpi(\bX)=w(\mathcal{L}(\bX))p(\bX).
\end{align}

In this paper, we focus on  two most commonly used labeled multi-object distributions, namely the GLMB and LMB distributions. A GLMB labeled RFS is distributed according to: ~\cite{GLMB,refr:tracking-2}
\begin{align}\label{GLMB}
\begin{split}
\bpi(\bX)=\Delta(\bX){\sum}_{c\in\mathbb{C}}w^{(c)}(\mathcal{L}(\bX))[p^{(c)}]^\bX
\end{split}
\end{align}
where $\mathbb{C}$ is a discrete index set, and $w^{(c)}(L)$ and $p^{(c)}$ satisfy
\begin{align}
\begin{split}
{\sum}_{L\subseteq\mathbb{L}}{\sum}_{c\in\mathbb{C}}w^{(c)}(L)&=1,\,\,\mbox{and}
\int p^{(c)}(x,\ell)dx=1.
\end{split}
\end{align}
A GLMB RFS is completely characterized by the set of parameters $\{(\omega^{(c)}(I),p^{(c)}(\cdot)):(I,c)\in\mathcal{F}(\mathbb{L})\times\mathbb{C}\}$.

A labeled multi-Bernoulli (LMB) RFS [24] with state space $\mathbb{X}$, label space $\mathbb{L}$ and (finite) parameter set $\{(r^{(\ell)},p^{(\ell)}(x)):\ell\in\mathbb{L}\}$, is distributed according to
\begin{align}\label{LMB}
\begin{split}
\bpi(\mathbf{X})=\Delta(\bX)w(\mathcal{L}(\bX))p^\bX
\end{split}
\end{align}
where
\begin{align}
w(I)&={\prod}_{i\in\mathbb{L}}(1-r^{(i)}){\prod}_{\ell\in I }1_{\mathbb{L}}(\ell)\frac{r^{(\ell)}}{1-r^{(\ell)}}\\
p(x,\ell)&=p^{(\ell)}(x).
\end{align}

The centerpiece of the RFS based multi-object filtering  is the Bayes multi-object filter~\cite{book_mahler}, which recursively propagates the multi-object posterior density forward in time through a prediction then an update step. When the objects are modelled by labeled RFSs, the Bayesian filter also becomes a multi-object tracker, as object identities (along with other parameters of labeled multi-object distributions) are propagated through the prediction and update steps of the filter.
Of particular interest in this paper is the $\delta$-GLMB filter proposed by Vo~\textit{et~al.}~\cite{GLMB,LMB_Vo2}. This filter is devised based on assuming a special type of GLMB distribution (called $\delta$-GLMB distribution), and is more intuitive on label and data association hypotheses and can be directly implemented. 

Due to presence of explicit data associations in the standard multi-object likelihood, the $\delta$-GLMB suffers from exponential growth of number of components of the posterior with time. To resolve this problem, approximations of $\delta$-GLMB filter that allow tractability have been proposed. An approximation that preserves both the first-order moment and cardinality distribution  of the posterior is the M$\delta$-GLMB filter~\cite{Fantacci-BN}. A faster yet less accurate approximation (that only preserves the first-order moment of the posterior) is LMB filter~\cite{delta_GLMB}. The LMB filter can be implemented substantially faster than the M$\delta$-GLMB filter.

\begin{Rem}
In this paper, an RFS $X$ defined on space $\mathcal{F}(\mathbb{X})$ is referred to as an unlabeled RFS, while an RFS $\bX$ defined on space $\mathcal{F}(\mathbb{X}\times\mathbb{L})$ with each realization having distinct labels is referred to as a labeled RFS. Both unlabeled and labeled RFSs belong to the family of simple finite point processes \cite{Stochastic-Geometry}. 
\end{Rem}
%Denoting the label space at time $k$ by $\mathbb{L}_{[k]}$ (note that due to birth process, the size of label space linearly increases with time), the number of components of the M$\delta$-GLMB posterior is proportional to $|\mathcal{F}(\mathbb{L}_{[k]})|$, whereas the number of components  for an LMB posterior is proportional to $|\mathbb{L}_{[k]}|$.

\subsection{Generalized Covariance Intersection (GCI)}
Using the concept of GCI for distributed multi-sensor fusion was first proposed by Mahler~\cite{Mahler-1} who later developed the GCI fusion  to extend the theory of finite set statistics (FISST) to the distributed environment. Consider a set of sensor nodes $\mathcal{N}=\{1,2,\!\cdots\!,N_s\}$ in a sensor network. Suppose that in  each node $s\in\mathcal{N}$, an RFS-based multi-object filter returns a local multi-object posterior $\pi_s(X)$ defined on a space $\chi$ and computed on the basis of the local information embedded in measurements acquired at node $s$. The GCI fusion rule combines all the local posteriors, returning their geometric mean in the form of an exponential mixture of the local densities,
%$\Pi=\{\pi_s\}_{s\in\mathcal{N}}$ 
\begin{align}\label{G-CI}
\begin{split}
\pi_{\omega}(X)=\frac{\prod_{s\in\mathcal{N}}[\pi_{s}(X)]^{\omega_s} }                                             {\int\prod_{s\in\mathcal{N}}[\pi_{s}(X)]^{\omega_s} \delta X}
\end{split}
\end{align}
where the integral is a set integral as defined in FISST~(see~\cite{book_mahler}) and the weights $\omega_s$ are user-defined parameters with the constraint:
\begin{equation}\label{weight}
\omega_s\geqslant 0, \,\,\,\,{\sum}_{s\in\mathcal{N}}\omega_s=1.
\end{equation}

Note that the GCI rule~(\ref{G-CI}) can also be used to fuse labeled multi-object densities. In that case, the labeled set integral should be used, as defined in \cite{LMB_Vo2}. With labeled multi-object posteriors the space $\chi$ is $\mathbb X \times \mathbb L$, otherwise we have $\chi = \mathbb X$.

The name GCI stems from the fact that (\ref{G-CI}) is the multi-object counterpart of the analogous fusion rule for (single-object) probability densities \cite{EMD-Julier} which, in turn, is a generalization of \textit{Covariance Intersection} originally conceived  for Gaussian probability densities \cite{Uhlmann}. In~\cite{Heskes,Battistelli}, it has been shown that the GCI fusion in (\ref{G-CI}) essentially minimizes the weighted sum of the Kullback-Leibler divergence (KLD) with respect to the local densities, i.e.
\begin{equation}\label{KLA}
  \pi_\omega=\arg \min_\pi {\sum}_{s\in\mathcal{N}}\,\omega_s D_{\text{KL}}(\pi;\pi_s)\end{equation}
where $D_{\text{KL}}$ denotes the KLD, defined as:
\begin{equation}\label{KLD}
\begin{split}
 D_{\text{KL}}(f;g)\triangleq\int f(X)\log{\left({f(X)}\big{/}{g(X)}\right)}\delta X.
  \end{split}
\end{equation}

In view of (\ref{KLA}), the GCI fusion is also called  Kullback-Leibler average (KLA) fusion. In Bayesian statistics, the KLD  can be seen as the information gain achieved when moving from a prior density to a posterior density. Hence, the GCI fusion   essentially provides the density that minimizes the weighted sum of the information gains from the local densities on the basis of (\ref{KLA}).  This choice is coherent with the Principle of Minimum Discrimination Information (PMDI) according to which the probability
density which best represents the current state of knowledge is the one
which produces an information gain as small as possible.
This property is important in order to ensure immunity to double counting, thus avoiding being overconfident on the available information.

\section{Drawbacks of GCI Fusion with Labeled Multi-Object Densities}\label{chp:3}
%Consider a distributed fusion problem in which different sensors have the same field of view and observe the same objects. 
Consider a set of labeled multi-object densities and the corresponding weights $\bPi=\{(\bpi_s(\bX),\omega_s)\}_{s\in\mathcal{N}}$, with each $\bpi_s$ defined on space $\mathbb{X}\times\mathbb{L}$, in the form of (\ref{labeled-RFS}),
\begin{equation}
\begin{split}
&\bpi_s(\{(x_1,\ell_1),\cdots,(x_n,\ell_n)\})=\\
&\,\,\,\,\,\,\,\,\,\,\,\,w_s(\{\ell_1,\cdots,\ell_n\})p_s(\{(x_1,\ell_1),\cdots,(x_n,\ell_n)\}).\\
\end{split}
\end{equation}
Substituting into the GCI fusion rule~\eqref{G-CI}, leads to a fused density in the similar form
\begin{equation}
\begin{split}
&\bpi_\omega(\{(x_1,\ell_1),\cdots,(x_n,\ell_n)\})=\\
&\,\,\,\,\,\,\,\,\,\,\,\,w_\omega(\{\ell_1,\cdots,\ell_n\})p_\omega(\{(x_1,\ell_1),\cdots,(x_n,\ell_n)\}),\\
\end{split}
\end{equation}
where
\begin{align}\label{GCI-GLMB-w}
&\notag w_\omega(\{\ell_1,\!\cdots\!,\ell_n\})=\\
&\,\,\,\,\,\,\,\,\,\,\,\,\,\,\,\,\frac{\prod_{s\in\mathcal{N}}[w_s(\{\ell_1,\cdots,\ell_n\})]^{\omega_s}\eta(\{\ell_1,\cdots,\ell_n\})}{\sum_{I\in\mathcal{F}(\mathbb{L})}\prod_{s\in\mathcal{N}}[w_s(I)]^{\omega_s} \eta(I)}\\
\label{GCI-GLMB-p}
&\notag p_\omega(\{(x_1,\ell_1),\!\cdots\!,(x_n,\ell_n)\})=\\
&\,\,\,\,\,\,\,\,\,\,\,\,\,\,\,\,\frac{\prod_{s\in\mathcal{N}}[p_s(\{(x_1,\ell_1),\!\cdots\!,(x_n,\ell_n)\})]^{\omega_s}}{\eta(\{\ell_1,\cdots,\ell_n\})}
\end{align}
with 
\begin{equation}\label{eta}\notag{\small{\begin{split}
\eta(\{\ell_1,\!\cdots\!,\ell_n\})=\!\!&\int \!\!{\prod}_{s\in\mathcal{N}}[p_s(\{(x_1,\ell_1),\!\cdots\!,(x_n,\ell_n)\})]^{\omega_s}\\
 &\,\,\,\,\,d  \, (x_1\cdots x_n).
\end{split}}}\end{equation}
This indicates that GCI fusion for labeled densities essentially  is performed label-wise, and thus   inherently demands perfect label consistencies between different local sensors, i.e., the same track has the same label in all sensor nodes. When different labels are associated with the same object  in different sensors, GCI fusion does not make sense and indeed performs poorly  because the probabilities $w_s(I)$ or the conditional probability density $p_s(\bX)$  for the same label set hypothesis (except for $\emptyset$) can have large disparity at different sensor nodes, which can largely diminish the fused probability of the label set hypothesis, $w_\omega(I)$.

In the following, we present three common phenomena that lead to the label consistency assumption being violated and GCI fusion failing to produce accurate results.

%\subsection*{Using Adaptive Birth Processes}
\vspace{3mm}
\noindent\emph{Using Adaptive Birth Processes}

The standard formulation of labeled multi-object filters is based on assuming that object birth process is known  as a priori. In some practical situations where the objects can appear anywhere in the surveillance area, the object birth intensity needs to cover the entire the surveillance area and does not add any prior information to the filtering process. An extension which can distinguish between the persistent
and newly-born objects is to formulate an adaptive birth process that is tuned at each scan using the received observations. In presence of such adaptive birth processes at each sensor of a distributed multi-object tracking system, the same newly-born object may be labeled differently in different sensors.
%\subsection*{Random Uncertainties in Observations}

\vspace{3mm}
\noindent \emph{Random Uncertainties in Observations}

Suppose  different sensors share  the same prior information based birth processes \cite{LMB_Vo2}.  In this situation, even if the same label drawn from the label spaces of different sensors has the same implication,  statistical distribution of labels conditioned on the observation set could be substantially different from one sensor node to another, because of randomness of observations. As a result,  the estimated label for the same object may be different in different sensor nodes.

Note that to ensure uniqueness of labels, in labeled multi-object tracking algorithms, a label $\ell$ is comprised of two elements: the time of birth $k$ and the index $i$ that distinguishes different objects born at the same time, i.e.  
$\ell=(k,i)$. It is common that with the labeled multi-object posterior formed in each sensor node, for each object there is a label, say $(\hat k, \hat i)$, with a large weight and hence, representing the estimated label for that object. Due to false alarms (clutters), miss-detections or excessive observation noise, the estimated time of birth $\hat k$ may be different from one sensor node to another and  from the true time of birth $k_0$.  Specifically,
% \begin{enumerate}[$\bullet$] 
% \item 

$\bullet$ $\hat k< k_0$ may occur in a local sensor node due to a false measurement appearing nearby the true track before $k_0$, leading to the deduction that the object is born earlier than $k_0$.

$\bullet$ $\hat k>k_0$ may occur in a local sensor node due to the excessive observation noise or mis-detection of the object  during a few first time steps, leading to the deduction that the object is born after $k_0$.

%\end{enumerate}  
%\subsection*{Local Pruning}
\noindent \emph{Local pruning}

For the sake of numerical tractability of the labeled set filtering algorithms, and reduction of communication costs  in a distributed sensor network, pruning strategies are usually devised to keep the number of hypotheses bounded in each sensor node. This can clearly lead to an object's label to be pruned in one sensor node while remaining in the other, and the label-wise GCI fusion will lead to that the fused label set hypothesis including this label to be given a zero probability because it not supported by all sensor nodes.
\begin{Exa}
Consider a sensor network with two sensors employing an LMB filter in each sensor node.   The  surveillance region is $[-800,800]~\text{m} \times[-600,600]$~m. The standard dynamic and observation models  provided in \cite{LMB_Vo2} are used. The observation model of each single object is linear Gaussian,  the probability of detection $P_{D,k}=0.99$, and the intensity function of clutter $\kappa(\cdot)=5.2\times10^{-6}$. The transition of each single target follows the linear Gaussian model, and the probability of survival $P_{S,k}=0.99$. Both sensors have the prior knowledge that objects are born during times $k = $4\,\text{s}, 5\,\text{s} or 6\,\text{s}, and around the origin $(0,0)$~m. The birth process used at each  sensor is a labeled Bernoulli process $(r^{(k,i)},p^{(k,i)}(x))$,  
\begin{equation}\notag
p^{(k,i)}(x)=\mathcal{N}(x;(0,0,0,0);\mathrm{diag}([300\,\,300\,\,300\,\,300]))
\end{equation}
where  the state $x$   is a vector of planar position and velocity,  the index $i=1$ because only one object is born at the same time, and $k$ denotes the time of birth taking values from $\{4,5, 6\}\,\text{s}$.
The true object is born at $k=5$\,s, and the  true track is shown in Fig. 1 (a).  Two representative types of measurements from sensors 1 and 2 are shown in Fig. 1 (a). No pruning strategies are adopted.

Fig. 1 (a)  also shows object state estimates computed at each sensor node of a single run, which reflect that both local sensors can  estimate the kinematic states accurately in general.  However, the time of birth estimates computed at the two sensor nodes have a  small  but indeed existing difference. Specifically, \begin{enumerate}[Sensor 1:]
\item
due to an excessively noisy measurement at $k\!=\!5$\,s, the object is missed but detected for the first time at $\hat k_1\!=\!6$\,s which is later than the true time of birth.
\item
at time $k\!=\!4$\,s a false measurement is close  to the true track and lead to an estimated birth time of $\hat k_2\!=\!4$\,s which is earlier than the true time of birth.
\end{enumerate}
\begin{figure}[!h]
\begin{minipage}[htbp]{0.49\linewidth}
  \centering
\centerline{\includegraphics[width=4.95cm]{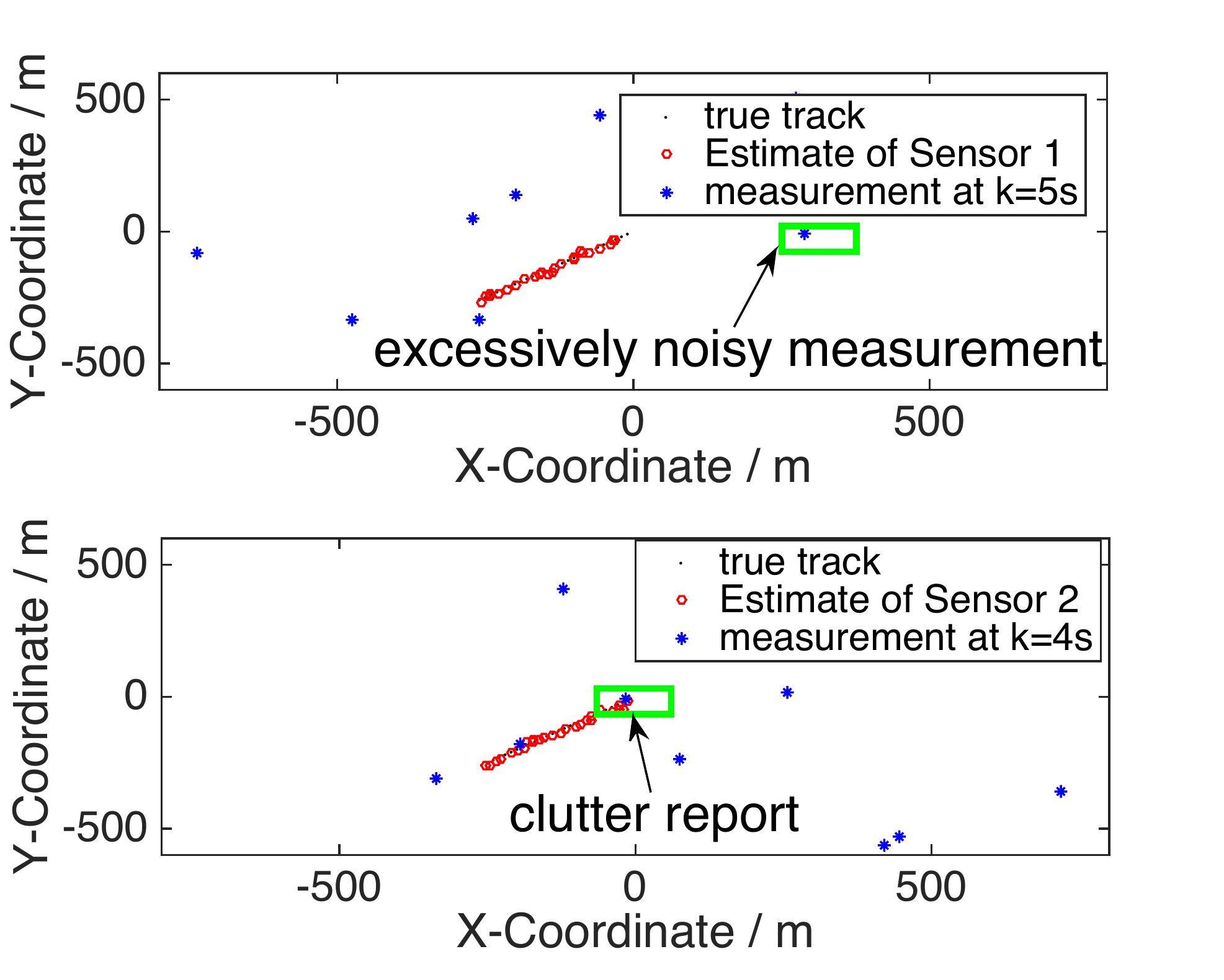}}
%  \vspace{1.5cm}
  \centerline{\small{\small{(a)}}}\medskip
  \end{minipage}
  \hfill
\begin{minipage}[htbp]{0.49\linewidth}
  \centering
  \centerline{\includegraphics[width=4.95cm]{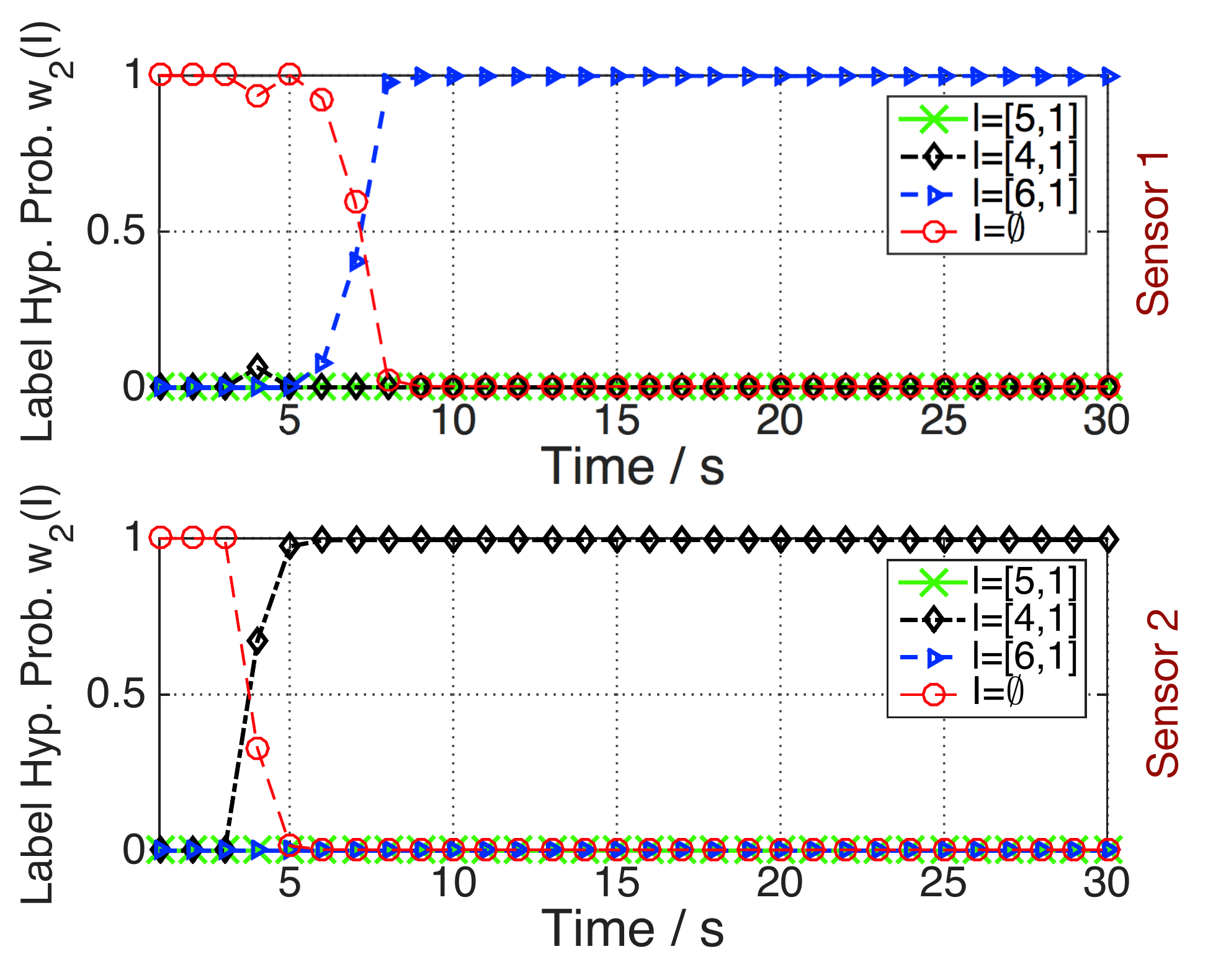}}
%  gci_divergence_and_yes_object_probability.eps
%  \vspace{1.5cm}
  \centerline{\small{\small{(b)}}}\medskip
\end{minipage}
\caption{(a) Tracks and measurements at  two sensors; (b) The respective posterior probabilities $w_1\!(I)$  and $w_2(I)$  for different label set hypotheses at  sensors 1 and 2.}
\label{pbp_mismatching_phenomenon}
\end{figure}
%\footnote{Note that the existence probability $r^{(\ell)}$ at LMB filter plays the same role as label set probability $w(I)$ in the general label multi-object filter.}
Fig. 1 (b) shows the posterior probabilities   $w_1(I)$ and $w_2(I)$ for different label set hypotheses $I=\{(4,1)\}, \{(5,1)\}, \{(6,1)\}, \emptyset$ at sensor nodes 1 and 2, respectively. A large disparity can be observed between the probabilities of the same label set hypothesis at the two sensor nodes, which is the main reason why $\hat k_1$ and $\hat k_2$ are different. Furthermore, if we prune all the label set hypotheses with probabilities less than a threshold $\Gamma=10^{-6}$, at times $k>8\,$s, only one label set hypothesis survives at each sensor node: (6,1) at sensor 1, and (4,1) at sensor 2. GCI fusion  will be completely erroneous in this case.
%, leading to zero probability for all label hypotheses
\end{Exa}
 %Merging
\section{Motivating Analysis}\label{chp:4}
In the previous section, we revealed the drawback of GCI fusion for labeled multi-object densities through an intuitive observation of the fusion formulas,  three common phenomena and a typical numerical experiment. In this section, we  present a principled analysis of how  label inconsistencies affect the fusion performance  from the perspective of PMDI and the declaration  of the object existence.
% {\color{blue}{by exploiting the GCI divergence}}.
 The performance analysis will also motivate the proposed solution that is robust to label mismatches. 

\subsection{Mathematical Tools}
We firstly  introduce the following  notions 
%{\color{blue}{provide mathematical definitions}} 
which are the basis of the subsequent   theoretical analyses.
\subsubsection{Conditional Multi-label Distribution}
We firstly review the  mathematical treatment to transform a labeled multi-object density to its unlabeled  version. The unlabeled version of  a labeled RFS on $\mathbb{X}\times \mathbb{L}$ is given by $\mathcal{K}(\bX)=\{\mathcal{K}(\bx):\bx\in\bX\}$,
where $\mathcal{K}:\mathbb{X}\times\mathbb{L}\rightarrow\mathbb{X}$ is the projection defined by $\mathcal{K}((x,\ell))=x$. Given a labeled RFS $\bX$ distributed according to $\bpi$, $X=\mathcal{K}(\bX)$ is distributed according to the following marginal:~\cite{LMB_Vo2} 
\begin{equation}\label{unlabel-marginal}
{\small{\begin{split}
&\!\!\!\!\!\!\!\!\pi(\{x_1,\!\cdots\!,x_n\})\!=\!\!{\sum}_{\!(\ell_1,\!\cdots\!,\ell_n)\in\mathbb{L}^n}\bpi(\{(x_1,\ell_1),\!\cdots\!,(x_n,\!\ell_n)\}).\!\!\!\!\!\!
\end{split}}}
\end{equation}
\begin{Def}
Given a labeled multi-object density $\bpi$ on $\mathbb{X}\times\mathbb{L}$, and any positive integer $n$, the joint  probability distribution of labels $\ell_1,\cdots,\ell_n$ on $\mathbb{L}^n$  conditional on their corresponding (unlabeled) states $x_1,\cdots,x_n$ is given by
\begin{equation}
\label{varpi}
\begin{split}
\!\!\!\varpi(\{(\ell_1|x_1),\!\cdots\!,(\ell_n|x_n)\})\triangleq\frac{\bpi(\{(x_1,\ell_1),\!\cdots\!,(x_n,\ell_n)\})}{\pi(\{x_1,\!\cdots\!,x_n\})}
\end{split}
\end{equation}
where $\pi(\!\{x_1,\!\cdots\!,x_n\}\!)$ is given in (\ref{unlabel-marginal}).
By convention, $\!\varpi(\emptyset)\!\triangleq\!\!1$.
\end{Def}
%The conditional distribution $\varpi(\{(\ell_1|x_1),\cdots,(\ell_n|x_n)\})$ provides the joint probability 
%that, given a set of unlabeled states $\{x_1,\cdots,x_n\}$, $x_i$ is labeled with $\ell_i$, $i=1,\cdots,n$.  According to Bayes rule, it can also be computed  by
%\begin{equation}\label{label-distribution}\begin{split}
%&\varpi(\{(\ell_1|x_1),\cdots,(\ell_n|x_n)\})=\\
%&\frac{w(\{\ell_1,\!\cdots\!,\ell_n\})p(\{(x_1,\ell_1),\!\cdots\!,(x_1,\ell_n)\})}{{\sum}_{(\ell_1,\!\cdots\!,\ell_n)\in\mathbb{L}^n}w(\{\ell_1,\!\cdots\!,\ell_n\})p(\!\{(x_1,\ell_1),\!\cdots\!,(x_1,\ell_n)\}\!)}
%\end{split}
%\end{equation}
%where  $w(\{\ell_1,\cdots,\ell_n\})$ and $p(\{(x_1,\ell_n),\cdots,(x_n,\ell_n)\})$ are given in Definition 1. 
Note that \eqref{varpi} can be rewritten as:
\begin{equation}\label{labeled-RFS-2}
\begin{split}
&\bpi(\{(x_1,\ell_1),\cdots,(x_n,\ell_n)\})=\\
&\,\,\,\,\,\,\,\,\,\,\,\,\varpi(\{(\ell_1|x_1),\cdots,(\ell_n|x_n)\})\pi(\{x_1,\cdots,x_n\})
\end{split}
\end{equation}
where $n$ takes values from  the field of real number $\mathbb{N}$, which actually provides a new decomposition of labeled multi-object density. Thus, the conditional multi-label distribution $\varpi(\{(\ell_1|x_1),\!\cdots\!,(\ell_n|x_n)\})$ encapsulates all label-related information embedded in the labeled multi-object density $\bpi(\cdot)$. It is this information that makes it possible to estimate the labels of kinematic states and  produce tracks in labeled set filters. 
\subsubsection{GCI Divergence}
In this subsection, \textit{GCI divergence} is introduced as a new measure of discrepancy to quantify the degree of similarity between multiple densities, and evaluate the minimal information gain of GCI fusion with multiple  densities. 
 Consider a set of multi-object densities associated with their corresponding weights,  denoted by  
$$\Pi=\{(\pi_s(X),\omega_s): s\in\mathcal{N}\},$$
where each $\pi_s(X)$ is defined on the same space $\chi$, each weight $\omega_s$ is a given confidence for $\pi_s$, and the weights satisfy (\ref{weight}). For any  multi-object density $\pi(X)$ on $\chi$ (possibly $\mathbb{X}$ or $\mathbb{X}\times\mathbb{L}$),  the weighted \textit{average information gain} (AIG) from $\pi$ to  densities in $\Pi$ is defined as
 \begin{equation}\label{ADI}
{\small{\begin{split}D_{\text{AIG}}(\pi;\Pi) \triangleq {\sum}_{s\in\mathcal{N}}\,\omega_s D_{\text{KL}}(\pi;\pi_s).\end{split}}}
\end{equation}
According to (\ref{KLA}),  the GCI fusion rule \textit{by principle} minimizes the weighted AIG. The resulting minimal weighted AIG  over all densities on $\chi$ is given by:~\cite{Battistelli}
\begin{equation}\label{MADI}
{\small{\min_{\pi} D_{\text{{AIG}}}(\pi;\Pi)=-\log c(\Pi)}}
\end{equation}
where
\begin{equation}\label{c}
{\small{\begin{split}c(\Pi)= \int {\prod}_{s\in\mathcal{N}}[\pi_s(X)]^{\omega_s}\delta X.\end{split}}}
\end{equation}
The quantity $c(\Pi)$ is referred to as \textit{GCI coefficient}. It is always in [0,1] and presents a measure of similarity between densities in $\Pi$.  In this paper, we will call the minimal weighted AIG given in~\eqref{MADI}~and~\eqref{c} as \emph{GCI divergence} denoted by $G(\Pi)$:
\begin{equation}\label{GCI-divergence}
{\small{\begin{split}G(\Pi)=-\log c(\Pi)=-\log \int {\prod}_{s\in\mathcal{N}}[\pi_s(X)]^{\omega_s}\delta X.\end{split}}}
\end{equation} 

\emph{GCI divergence} is  a tool to quantify the degree of similarity between multiple densities. The larger the GCI divergence among densities in $\Pi$ is, the more the corresponding GCI fusion is violating the PMDI. Hence a large GCI divergence can be an indication that the information contained in the densities to be fused are not coherent. The extreme case  $G(\Pi)\rightarrow+\infty\,(c(\Pi)\rightarrow 0^+)$ occurs when the densities are compeletly incompatible and have different supports.
\subsubsection{Yes-Object Probability}
We define \emph{yes-object probability} and \emph{no-object probability} as $P_{\text{y}}(\pi) \triangleq1-\pi(\emptyset)$ and $P_{\text{n}}(\pi)\triangleq\pi(\emptyset)$, respectively, for a given multi-object posterior $\pi$~\cite{book_mahler}. 
Usually, object existence can be declared only if the yes-object probability is greater than  a threshold $\tau$ ( usually $0.5\leqslant\tau<1$).
The yes-object probability is the basis of multi-object state estimation in the sense that only if the existence of objects can be declared, the object states can be extracted; otherwise the best 
estimate of  multi-object state set will be an empty set (no-object inference).
%In case of having two densities $\pi_1$ and $\pi_2$, their GCI divergence is proportional to Renyi divergence~\cite{Uney-2} between the two densities, i.e. 
%\begin{equation}\label{RD-interperation}\notag
%G(\{(\pi_1,\omega_1),(\pi_2,\omega_2)\})=\omega_{2}\mathcal{R}_{\omega_1}(\pi_1||\pi_2)=\omega_1\mathcal{R}_{\omega_2}(\pi_2||\pi_1).
%\end{equation}
%Also, Bhattacharyya distance \cite{Bhattacharyya-distance} is a special case of GCI divergence for two densities with equal weights $\omega_1=\omega_2=\frac{1}{2}$, i.e. 
%\begin{equation}\notag
%D_{\text{B}}(\pi_1,\pi_2)=G(\{(\pi_1,0.5),(\pi_2,0.5)\}).
%\end{equation}
\subsection{Theoretical Analysis}
This section provides a thorough theoretical analysis of how the label inconsistencies between  different labeled multi-object densities affect the performance of GCI  fusion.

Consider a set of labeled multi-object densities $\bPi=\{(\bpi_s,\omega_s)\}_{s\in\mathcal{N}}$ with each $\bpi_s$ defined on space $\mathbb{X}\times\mathbb{L}$, and a set of unlabeled multi-object densities $\Pi=\{(\pi_s,\omega_s)\}_{s\in\mathcal{N}}$ with each $\pi_s$ the marginal of $\bpi_s$ on $\mathbb{X}$. The following proposition states the relationship between GCI divergences of the two sets of multi-object densities.
%having the form of (\ref{labeled-RFS-2})
\begin{Pro}
If each labeled multi-object density $\bpi_s(\cdot)$ is 
\begin{equation}\label{labeled-RFS-3-sensor}
\begin{split}
&\bpi_s(\{(x_1,\ell_1),\cdots,(x_n,\ell_n)\})=\\
&\,\,\,\,\,\,\,\,\,\,\,\,\varpi_s(\{(\ell_1|x_1),\cdots,(\ell_n|x_n)\})\pi_s(\{x_1,\cdots,x_n\})\end{split}
\end{equation}
of form (\ref{labeled-RFS-2}),  then the GCI divergence for densities in $\bPi$ is given by  
\begin{equation}\label{GCI-label-decomposion}
G(\bPi)=G(\Pi)-\log \emph{E}_{\pi_\omega}[\mu(X)] 
\end{equation}
where $G(\Pi)$ is the GCI divergence  for densities in $\Pi$, and
\begin{equation}\label{GCI-coefficient-label}
{\small{\begin{split}
&\mu(\{x_1,\cdots,x_n\})=\\
&{\sum}_{(\ell_1,\cdots,\ell_n)\in\mathbb{L}^n} {\prod}_{s\in\mathcal{N}}[\varpi_s(\{(\ell_1|x_1),\cdots,(\ell_n|x_n)\})]^{\omega_s}
\end{split}}}
\end{equation}
is GCI coefficient for the set of conditional multi-label distributions $\{(\varpi_s(\{(\ell_1|x_1),\!\cdots\!,(\ell_n|x_n)\}),\omega_s)\}_{s\in\mathcal{N}}$, and $\emph{E}_{\pi_\omega}(\cdot)$ denotes expectation
with respect to  $\pi_\omega$,  with  $\pi_\omega$  being the fused density returned by GCI fusion of all densities in $\Pi$.
\end{Pro}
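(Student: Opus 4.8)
The plan is to compute the labeled GCI coefficient $c(\bPi)=\int\prod_{s\in\mathcal{N}}[\bpi_s(\bX)]^{\omega_s}\delta\bX$ directly, expand the labeled set integral as a sum over cardinality together with a sum over label tuples in $\mathbb{L}^n$, and then exploit the decomposition \eqref{labeled-RFS-3-sensor} to separate the label-dependent factors from the kinematic ones. The crucial structural observation is that $\pi_s(\{x_1,\cdots,x_n\})$ carries no label dependence, so it can be pulled out of the inner summation over $(\ell_1,\cdots,\ell_n)\in\mathbb{L}^n$, leaving precisely the quantity $\mu(\{x_1,\cdots,x_n\})$ of \eqref{GCI-coefficient-label} as the residual label sum.

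First I would factor the integrand using \eqref{labeled-RFS-3-sensor},
\begin{equation}\notag
\prod_{s\in\mathcal{N}}[\bpi_s(\{(x_1,\ell_1),\cdots,(x_n,\ell_n)\})]^{\omega_s}=\prod_{s\in\mathcal{N}}[\varpi_s(\{(\ell_1|x_1),\cdots,(\ell_n|x_n)\})]^{\omega_s}\prod_{s\in\mathcal{N}}[\pi_s(\{x_1,\cdots,x_n\})]^{\omega_s},
\end{equation}
and substitute it into the cardinality-and-label expansion of the labeled set integral for $c(\bPi)$. Interchanging the finite label summation with the kinematic integration and factoring out the label-free product collapses the inner label sum into $\mu$, yielding
\begin{equation}\notag
c(\bPi)=\sum_{n=0}^{\infty}\frac{1}{n!}\int\mu(\{x_1,\cdots,x_n\})\prod_{s\in\mathcal{N}}[\pi_s(\{x_1,\cdots,x_n\})]^{\omega_s}\,d(x_1,\cdots,x_n).
\end{equation}

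Next I would invoke the definition of the unlabeled fused density from \eqref{G-CI}, namely $\pi_\omega(X)=\prod_{s\in\mathcal{N}}[\pi_s(X)]^{\omega_s}/c(\Pi)$, to replace $\prod_{s\in\mathcal{N}}[\pi_s(X)]^{\omega_s}$ by $c(\Pi)\,\pi_\omega(X)$. This rewrites the previous display as the unlabeled set integral $c(\bPi)=c(\Pi)\int\mu(X)\pi_\omega(X)\delta X=c(\Pi)\,E_{\pi_\omega}[\mu(X)]$, where the remaining set integral is recognized as the expectation of $\mu$ under $\pi_\omega$. Taking $-\log$ of both sides and using the identity $G(\cdot)=-\log c(\cdot)$ from \eqref{GCI-divergence} then gives $G(\bPi)=G(\Pi)-\log E_{\pi_\omega}[\mu(X)]$, which is \eqref{GCI-label-decomposion}.

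The delicate points here are purely bookkeeping: expanding the labeled set integral with the correct $1/n!$ weighting and label-tuple structure, and justifying the interchange of the product over sensors with the label summation, which is legitimate precisely because each $\pi_s$ is independent of the labels. I therefore expect the main obstacle to be notational rather than conceptual, namely keeping the nested sums, the cardinality index $n$, and the conditioning bars in $\varpi_s$ aligned so that the inner sum is transparently identified with $\mu(\{x_1,\cdots,x_n\})$. No convergence or measure-theoretic subtlety arises beyond what is already implicit in the set-integral definitions assumed earlier.
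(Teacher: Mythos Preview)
Your proposal is correct and follows essentially the same route as the paper's proof: factor each $\bpi_s$ via \eqref{labeled-RFS-3-sensor}, expand the labeled set integral, collapse the label sum into $\mu$, and then recognize $\prod_s[\pi_s]^{\omega_s}=c(\Pi)\pi_\omega$ to obtain $c(\bPi)=c(\Pi)\,E_{\pi_\omega}[\mu(X)]$ before taking logs. The only cosmetic difference is that the paper carries the $-\log$ from the outset while you apply it at the end.
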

\begin{proof}
%\noindent\emph{Proof. }$\,$
Substituting the densities in~\eqref{labeled-RFS-3-sensor} in the definition of GCI divergence~(\ref{GCI-divergence}) leads to
\begin{small}
\begin{equation}\label{GCI-divergence-proof1}
{\small{\begin{split}
&G(\bPi) = -\log {\sum}_{n=0}^{\infty}\frac{1}{n!}\int \!\!{\prod}_{s\in\mathcal{N}}[\pi_s(\{x_1,\cdots,x_n\})]^{\omega_s}\\
&\!\!\!\!\!\!\sum_{(\ell_1,\!\cdots\!,\ell_n)\in\mathbb{L}^n}\prod_{s\in\mathcal{N}}[\varpi_s(\{(\ell_1|x_1),\!\cdots\!,(\ell_n|x_n)\})]^{\omega_s} d(x_1,\!\cdots\!,x_n).\\
\end{split}}}
\end{equation}
\end{small}
Substituting
\begin{equation}
{\small{\begin{split}
&c(\Pi)\!=\!\sum_{n=0}^{\infty}\frac{1}{n!}\!\int\! \prod_{s\in\mathcal{N}}[\pi_s(\{x_1,\!\cdots\!,x_n\})]^{\omega_s}d(x_1,\!\cdots\!,x_n)
\end{split}}}
\end{equation}
and  (\ref{GCI-coefficient-label}) into (\ref{GCI-divergence-proof1}), $G(\bPi)$ can be rewritten as
\begin{equation}\label{GCI-divergence-proof2}
{\small{\begin{split}
G(\bPi)\!=\!&-\log c(\Pi){\sum}_{n=0}^{\infty}\frac{1}{n!}\!\int\!\frac{{\prod}_{s\in\mathcal{N}}[\pi_s(\{x_1,\!\cdots\!,x_n\})]^{\omega_s}}{c(\Pi)}\\
&\times\mu(\{x_1,\cdots,x_n\}) d(x_1,\cdots,x_n).\\
%=&G(\pi)-\log\sum_{n=0}^{\infty}\frac{1}{n!}
%\int \pi_\omega(\{x_1,\cdots,x_n\})c(\{x_1,\cdots,x_n\})d(x_1,\cdots,dx_n)
\end{split}}}
\end{equation}
Based on the GCI fusion rule, the GCI fusion for $\Pi$ is
\begin{equation}\notag
{\small{ \begin{split}
\pi_\omega(X)={\prod}_{s\in\mathcal{N}}[\pi_s(X)]^{\omega_s}\big/c(\Pi).
\end{split}}}
\end{equation}
Hence, (\ref{GCI-divergence-proof2}) can be further represented as
\begin{equation}\notag
{\small{\begin{split}
G(\bPi)=&-\log c(\Pi)-\log{\sum}_{n=0}^{\infty}\frac{1}{n!} \int \pi_\omega(\{x_1,\cdots,x_n\})\\
&{} \times \mu(\{x_1,\cdots,x_n\})d(x_1,\cdots,x_n)\\
=&G(\Pi)-\log \text{E}_{\pi_\omega}[\mu(X)].
\end{split}}}
\end{equation}
%\qed
\end{proof}
%\vspace{-3mm}
The above result reveals that  $G(\bPi)$ can be decomposed into two parts: one part is  $G(\Pi)$ which reflects the discrimination information between kinematic states of different sensors;   the other part is  $-\log E_{\pi_\omega} [\mu(X)]$ with $E_{\pi_\omega} [\mu(X)]$  being the statistical average of the GCI coefficient of conditional multi-label distribution which reflects the discrimination  information between label distributions of different sensors.

\begin{Def}
We define the ``label inconsistency indicator'' with respect to a set of labeled multi-object densities $\bPi$ as 
\begin{equation}
d_G(\bPi)\triangleq G(\bPi)-G(\Pi)=-\log E_{\pi_\omega} [\mu(X)].
\label{dG_def}
\end{equation}
\end{Def}
The label inconsistency indicator $d_{G}(\bPi)$ is a  measure to quantify the inconsistencies of label information embedded into  multiple labeled densities. A larger value of $d_{G}(\bPi)$ indicates a higher level of  label inconsistencies between densities in $\bPi$.  Moreover, the quantity $d_{G}(\bPi)$  can reflect the difference between GCI divergences of $\bPi$ and $\Pi$.  The following corollary establishes upper and lower bounds on $d_G(\bPi)$. 
\begin{Cor}
The following inequalities hold,
\begin{align}
0 \leqslant d_G \leqslant -\log \pi_\omega(\emptyset)
%-\log\emph{E}_{\pi_\omega}[\mu(X)]&\geqslant0.
\end{align}
with $\pi_\omega$  the fused density returned by GCI fusion of $\Pi$.
\end{Cor}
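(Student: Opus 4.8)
The plan is to establish the two inequalities separately, both resting on the observation that for each fixed unlabeled state set $\{x_1,\cdots,x_n\}$ the conditional multi-label distribution is a genuine probability mass function over $\mathbb{L}^n$. Indeed, summing $\varpi_s(\{(\ell_1|x_1),\cdots,(\ell_n|x_n)\})$ from \eqref{varpi} over all $(\ell_1,\cdots,\ell_n)\in\mathbb{L}^n$ and invoking the marginalization \eqref{unlabel-marginal} gives ${\sum}_{(\ell_1,\cdots,\ell_n)\in\mathbb{L}^n}\varpi_s=1$ for every $s$, so the quantity $\mu(\{x_1,\cdots,x_n\})$ in \eqref{GCI-coefficient-label} is precisely the GCI coefficient of a family of \emph{probability} distributions.

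For the lower bound $d_G\geqslant 0$, I would show $\mu(X)\leqslant 1$ pointwise. Applying the weighted arithmetic--geometric mean inequality to each term,
\[
{\prod}_{s\in\mathcal{N}}[\varpi_s]^{\omega_s}\leqslant {\sum}_{s\in\mathcal{N}}\omega_s\varpi_s ,
\]
then summing over $(\ell_1,\cdots,\ell_n)\in\mathbb{L}^n$ and interchanging the two nonnegative sums yields
\[
\mu(X)\leqslant {\sum}_{s\in\mathcal{N}}\omega_s {\sum}_{(\ell_1,\cdots,\ell_n)\in\mathbb{L}^n}\varpi_s = {\sum}_{s\in\mathcal{N}}\omega_s = 1 ,
\]
using the normalization just noted together with the weight constraint \eqref{weight}. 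Since $\pi_\omega$ is a properly normalized multi-object density, taking the expectation preserves this bound, $\mathrm{E}_{\pi_\omega}[\mu(X)]\leqslant 1$, and monotonicity of $-\log(\cdot)$ gives $d_G=-\log \mathrm{E}_{\pi_\omega}[\mu(X)]\geqslant 0$ via \eqref{dG_def}.

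For the upper bound, I would expand the expectation as a set integral,
\[
\mathrm{E}_{\pi_\omega}[\mu(X)]={\sum}_{n=0}^{\infty}\frac{1}{n!}\int \pi_\omega(\{x_1,\cdots,x_n\})\,\mu(\{x_1,\cdots,x_n\})\,d(x_1,\cdots,x_n),
\]
and isolate the $n=0$ contribution. By the convention $\varpi_s(\emptyset)=1$ one has $\mu(\emptyset)=1$, so the $n=0$ term equals $\pi_\omega(\emptyset)$, while every term with $n\geqslant 1$ is nonnegative because $\pi_\omega\geqslant 0$ and $\mu\geqslant 0$. Dropping these nonnegative higher-cardinality terms therefore gives $\mathrm{E}_{\pi_\omega}[\mu(X)]\geqslant \pi_\omega(\emptyset)$, and applying $-\log(\cdot)$ yields $d_G\leqslant -\log\pi_\omega(\emptyset)$.

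The steps are essentially routine once the structural fact that each $\varpi_s(\cdot|X)$ is a probability distribution over labels is recognized. The only place demanding genuine care is the pointwise estimate $\mu(X)\leqslant 1$, where I must justify interchanging the summation over $\mathbb{L}^n$ with the AM--GM step (valid by nonnegativity) and confirm that the normalization ${\sum}_{\ell}\varpi_s=1$ holds uniformly in $X$. The upper bound is comparatively immediate, amounting to retaining the $n=0$ term and discarding the remaining nonnegative ones.
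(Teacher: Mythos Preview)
Your proof is correct and follows essentially the same route as the paper: both arguments bound $\mathrm{E}_{\pi_\omega}[\mu(X)]$ above by $1$ (using $\mu(X)\leqslant 1$ pointwise) and below by $\pi_\omega(\emptyset)$ (isolating the $n=0$ term and discarding the nonnegative remainder). The only cosmetic difference is that the paper simply invokes the earlier remark that a GCI coefficient always lies in $[0,1]$, whereas you supply a self-contained justification of $\mu(X)\leqslant 1$ via the weighted AM--GM inequality.
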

\noindent\emph{Proof. }$\,$
%\begin{proof}
By definition, $\varpi_s(\emptyset)=1$ is always true. Therefore, 
\begin{equation}\label{mu-emptyset}
\mu(\emptyset)={\prod}_{s\in\mathcal{N}}[\varpi_s(\emptyset)]^{\omega_s}=1.
\end{equation}
For $X\neq \emptyset$, the term $\mu(X)$ denotes a GCI coefficient and is therefore, within [0,1]. The quantity $\text{E}_{\pi}[\mu(X)]$ is given by
\begin{equation}\label{E-mu-1}\begin{split}
\!\!\text{E}_{\pi_\omega}[\mu(X)]\!=&\pi_\omega(\emptyset)\mu(\emptyset)\!+\!{\sum}_{n=1}^{\infty}\frac{1}{n!}\int\!\pi_\omega(\{x_1,\!\cdots\!,x_n\}) \\
&\times\mu(\{x_1,\!\cdots\!,x_n\})d(x_1,\!\cdots\!,x_n).
\end{split}\end{equation}
%Substitution of (\ref{mu-emptyset}) into (\ref{E-mu-1}), yields
%\begin{equation}
%\begin{split}
%\text{E}_{\pi_\omega}[\mu(X)]=&\pi_\omega(\emptyset)+{\sum}_{n=1}^{\infty}  \frac{1}{n!}\int \pi_\omega(\{x_1,\cdots,x_n\}) \\
%&\times\mu(\{x_1,\cdots,x_n\})d(x_1,\cdots,x_n).
%\end{split}
%\end{equation}
Since each $\mu(\cdot)$ term within summing integrations is less than or equal to 1, the upper bound of $\text{E}_{\pi_\omega}[\mu(X)]$ is given by
%% [inline block 0: 2 envs, 51443 chars -> data_tex | \begin{array}{rcl} \begin{small}...]

\right]
}
\]
where $\bI_2$ and $\mathbf{0}_2$ denote the $2\times 2$ identity and zero matrices, $\Delta =1$\,s is the sampling period, and $\sigma_\nu=5\,\text{m/s}^{2}$ is the standard deviation
of the process noise. The  state independent survival  probability of the object  is given by $P_{S,k}=0.98$.

Two types of birth procedures are considered in different experiments. One is the prior knowledge-based birth procedure. At each time $k$, the birth process is an LMB RFS with the parameter set $\bpi_{B}=\{(r_{B}^{(k,i)}, p_{B}^{(k,i)})\}_{i=1}^{3}$ where $r_{B}^{(k,i)}=0.04$ and $p_{B}^{(k,i)}=\mathcal{N}(x; m_{B}^{(i)}, P_{B})$ with $m_{B}^{(1)}=[200\,\, 400\,\,0\,\,0]^\top$, $m_{B}^{(2)}=[-\!150\,-\!\!310\,\,\,0\,\,0]^\top$, $m_{B}^{(3)}=[0\,\,400\,\,0\,\,0]^\top$, and $P_{B}=\mbox{diag}([900\,\,900\,\,400\,\,400])$.

 The other birth procedure  is the adaptive birth procedure proposed in \cite{delta_GLMB}. The LMB birth process at   time step $k+1$￼ ￼ depends on the measurement ￼ set $Z$￼ of the current time step $k$ and is given by $$\bpi_{B}=\{r_{B}^{(k+1,i)}(z), p_{B}^{(k+1,i)}(x|z)\}_{z\in Z}.$$ More specifically, the existence probability $r_{B}^{(k+1,i)}(z)$ is proportional to the probability that $z$ is not assigned to any track during the update at time step $k$:
% depending on the measurement $z\in Z^{k}$
\begin{align} \notag
%\label{adaptive_birth}
r_{B}^{(k+1,i)}(z)=\min\left(r_{B,\max},\frac{1-r_{U,k}(z)}{\sum_{\xi\in Z}1-r_{U,k}(\xi)}\cdot \lambda_{B,k+1}\right)
\end{align}
%where
%\begin{align}\label{adaptive_birth_r_1}
%r_{U,k}=\sux_{(I_+,\theta)\in \mathcal{F}(\mathbb{L}_+)\times \Theta_{I_+}}1_{\theta}(z) w_k^{(I_+,\theta)}
%\end{align}
%with the weight $w_k^{(I_+,\theta)}$ is a quantity for the local filter given by Eq. (59) of \cite{delta_GLMB}, 
where $r_{U,k}(z)$ denotes the  probability that  a measurement $z$ ￼ is associated to a track in the hypotheses  at time step $k$, $\lambda_{B,k+1}$ is the expected number of object births at time step  $k+1$, and $r_{B,\max}\in[0, 1]$ is the maximum existence probability of a newly born object. Each density $p_{B}^{(k+1,i)}(x|z)=\mathcal{N}(x; m_{B}(z), P_{B})$ with $m_{B}(z)=[z(1)\,\, z(2)\,\,0\,\, 0]^\top$, $P_{B}=\mbox{diag}([900\,\,900\,\, 400\,\, 400])$. The parameters $\lambda_{B,k+1}$ and $r_{B,\max}$ are set to be $0.8$ and $0.3$, respectively. The details about how to compute  the probability $r_{U,k}(z)$ are given in \cite{delta_GLMB}.

Each sensor node detects an object independently with the same probability $P_{D,k}$.  The single-object observation model is linear Gaussian 
$$g_k(z|\bx_{k}) = \mathcal{N}(z; \bH_k \bx_{k},\bR_k)$$
with parameters
\[
\displaystyle{
 \bH_k=\left[
\begin{array}{cc}
\bI_2  & \mathbf{0}_2
\end{array}
\right],
\,\,\,\,\,\,\,\mathbf{R}_k=\sigma_\varepsilon^{2}\bI_2
}
\]
where $\sigma_\varepsilon=25$\,m is the standard deviation of the measurement noise. The number of clutter reports
in each scan is Poisson distributed with $\lambda=10$. Each clutter report is sampled uniformly over the whole surveillance region.

The optimal sub-pattern assignment (OSPA) error \cite{MeMBer_Vo1}
serves as the main performance metric with the cut-off value $c=100$ m and the order parameter $p=1$. 
All performance metrics are averaged over 200 Monte Carlo (MC) runs.

\subsection{Scenario 1}
The robustness of  the proposed R-GCI-GLMB fusion algorithm is verified  by comparison with the classical GCI fusion of LMB posteriors (C-GCI-LMB) \cite{Fantacci-BT} in two experiments with  the ABP and the PBP used respectively. To this end, we consider a scenario involving three objects on a two dimensional surveillance region $[-500, 500]\,\text{m}\times[-500,500] \,\text{m}$, which is shown in Fig.~\ref{scenario_1}. For both fusion algorithms, the LMB filter is chosen as the local filter. 
For  GM implementations of  local LMB filter and  fusion algorithms, the parameters are chosen as follows: the truncation threshold for Bernoulli components is  $\gamma_t=10^{-4}$; pruning  and merging thresholds for Gaussian components are $\gamma_p=10^{-5}$ and  $\gamma_m=4$, respectively; the maximum number of Gaussian components is $N_{\max}=10$.
The duration of this scenario is $T=65$\,s. The probability of detection $P_{D,k}$ for each sensor is 0.99.

\begin{figure}
\centering
\includegraphics[width=6cm]{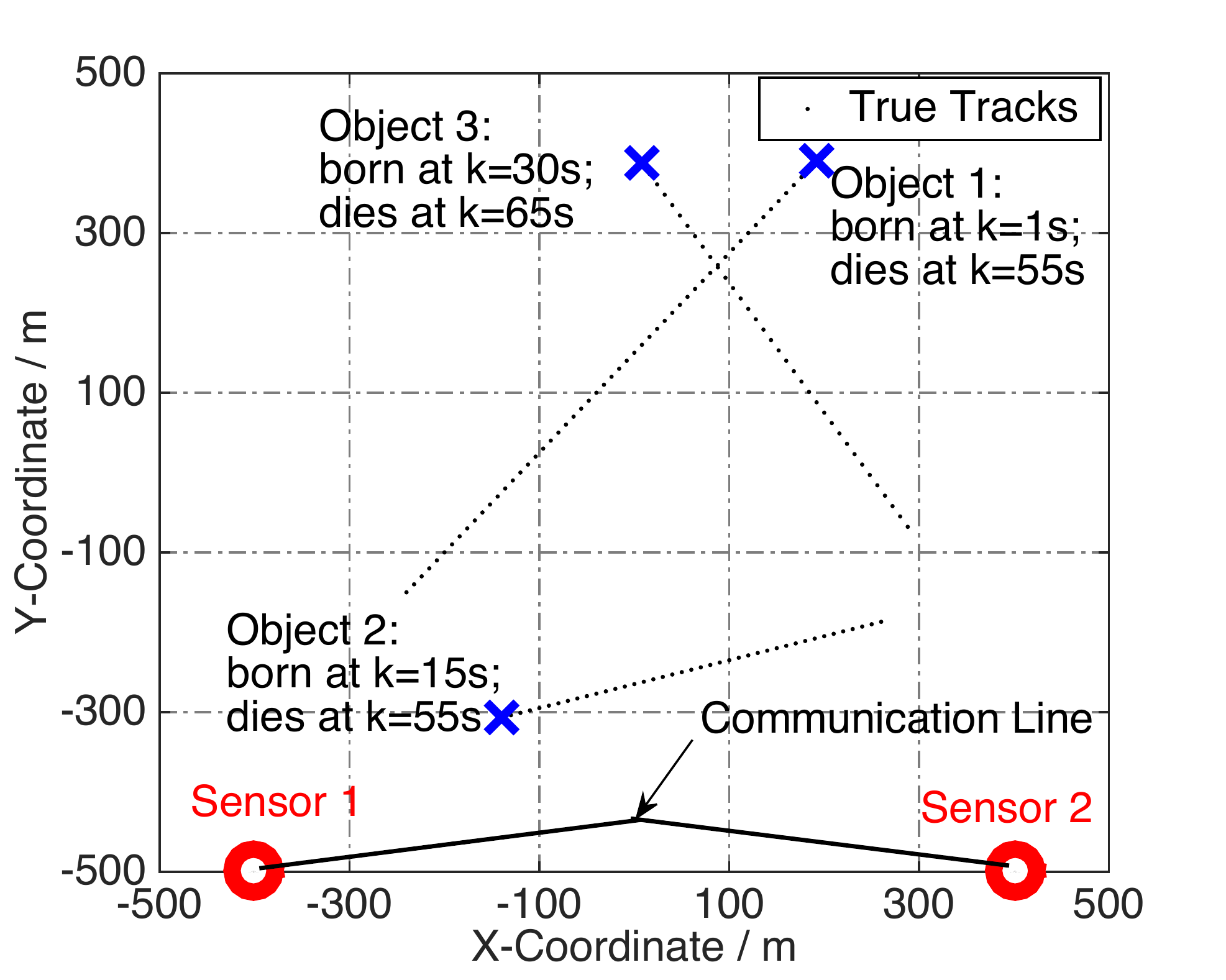}
\caption{The scenario of a distributed  sensor network with two sensors tracking three objects.\label{scenario_1}}
\end{figure}

\noindent\textit{\underline{Experiment 1}}: The performance metrics  for  R-GCI-GLMB  and  C-GCI-LMB fusions in presence of an ABP are shown in Fig. \ref{fig:ABP}.  Specifically, the cardinality estimates (Est.) and the corresponding standard deviations (Std.)  are presented in Fig. \ref{fig:ABP}(a), while the OSPA errors   are provided in Fig. \ref{fig:ABP}(b).

  Not surprisingly,  C-GCI-LMB fusion completely fails (returns highly erroneous estimates) when the ABP is in place due to the resulting inconsistencies between label assignments in each local filters and the reliance of labeled GCI fusion on label consistency between filters. This is while R-GCI-GLMB  fusion leads to errors that are significantly lower than errors returned by each local filter after each transient.  These results  highlight the robustness of the proposed R-GCI-GLMB algorithm when label mismatches happen.
\begin{figure}
	\begin{minipage}[!htb]{0.49\linewidth}
		\centering
		\centerline{\includegraphics[width=4.94cm]{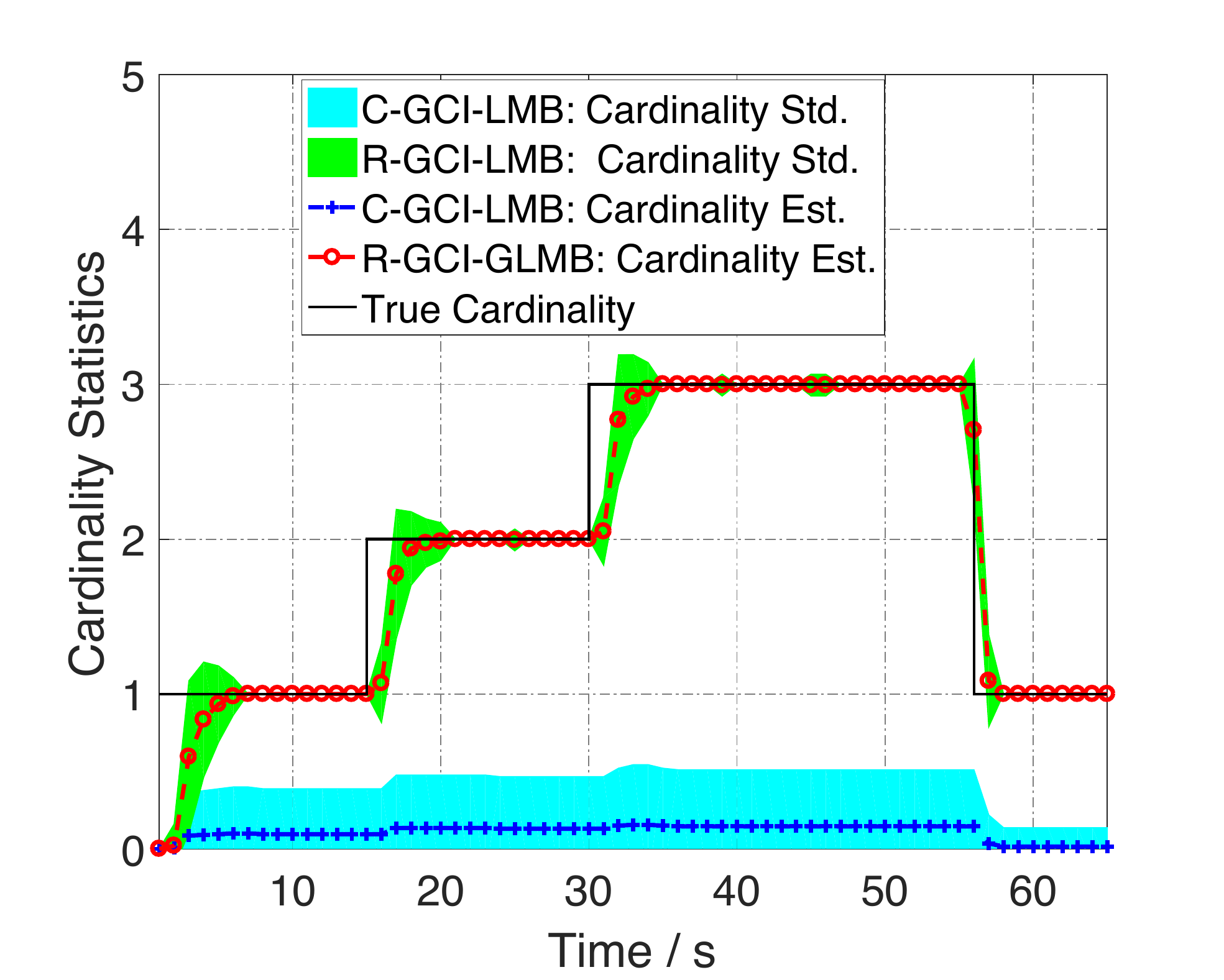}}
		\centerline{\small{(a)} }\medskip
	\end{minipage}
	\hfill
	\begin{minipage}[!htb]{0.49\linewidth}
		\centering
		\centerline{\includegraphics[width=4.94cm]{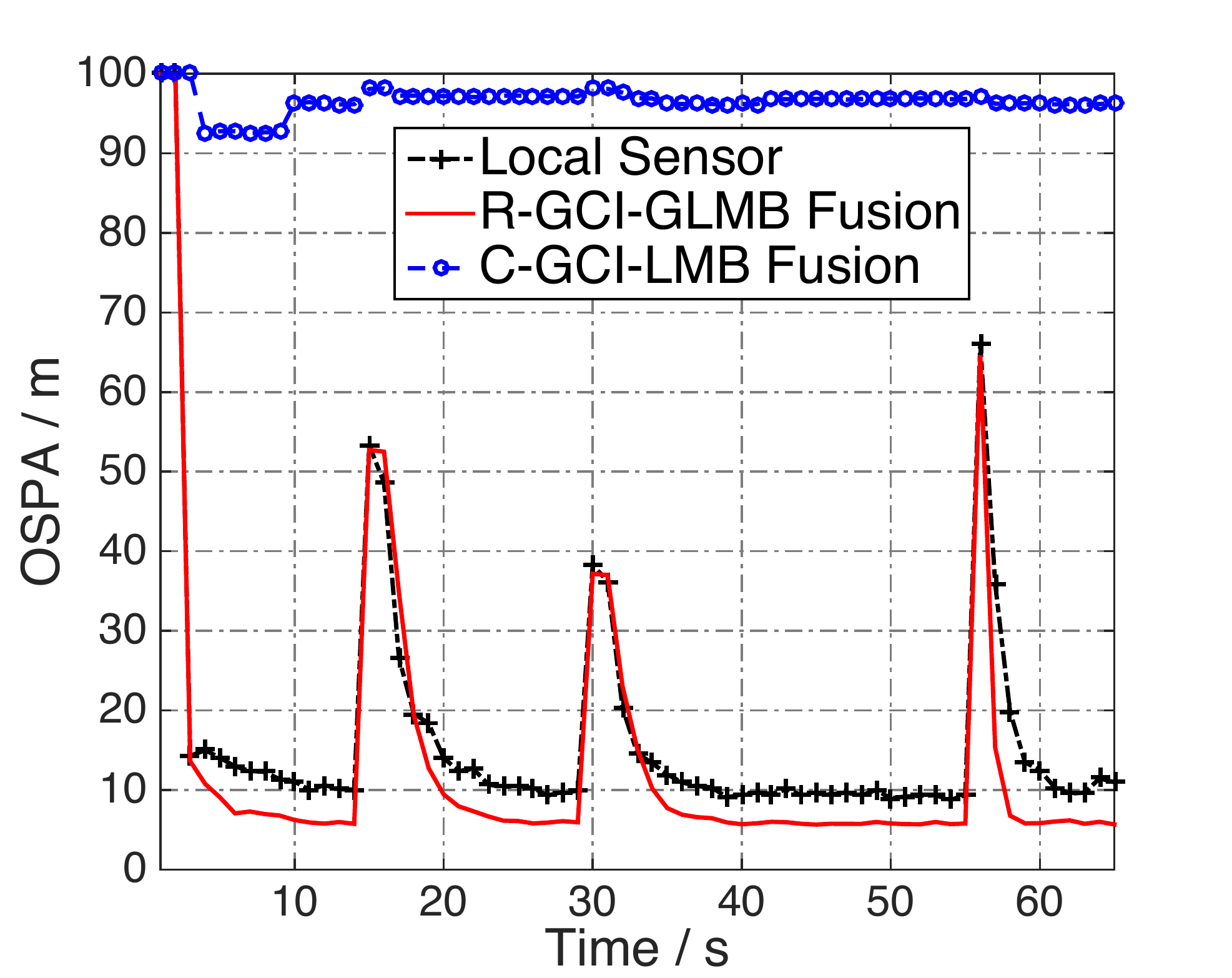}}
		\centerline{\small{(b)}}\medskip
	\end{minipage}
	\caption{Tracking performances of local sensor filter, R-GCI-GLMB and C-GCI-LMB fusion algorithms in scenario 1 in presence of an adaptive birth process: (a) cardinality statistics, (b) OSPA errors. \label{fig:ABP}}
\end{figure}
\begin{figure}
	\begin{minipage}[!htb]{0.49\linewidth}
		\centering
		\centerline{\includegraphics[width=4.95cm]{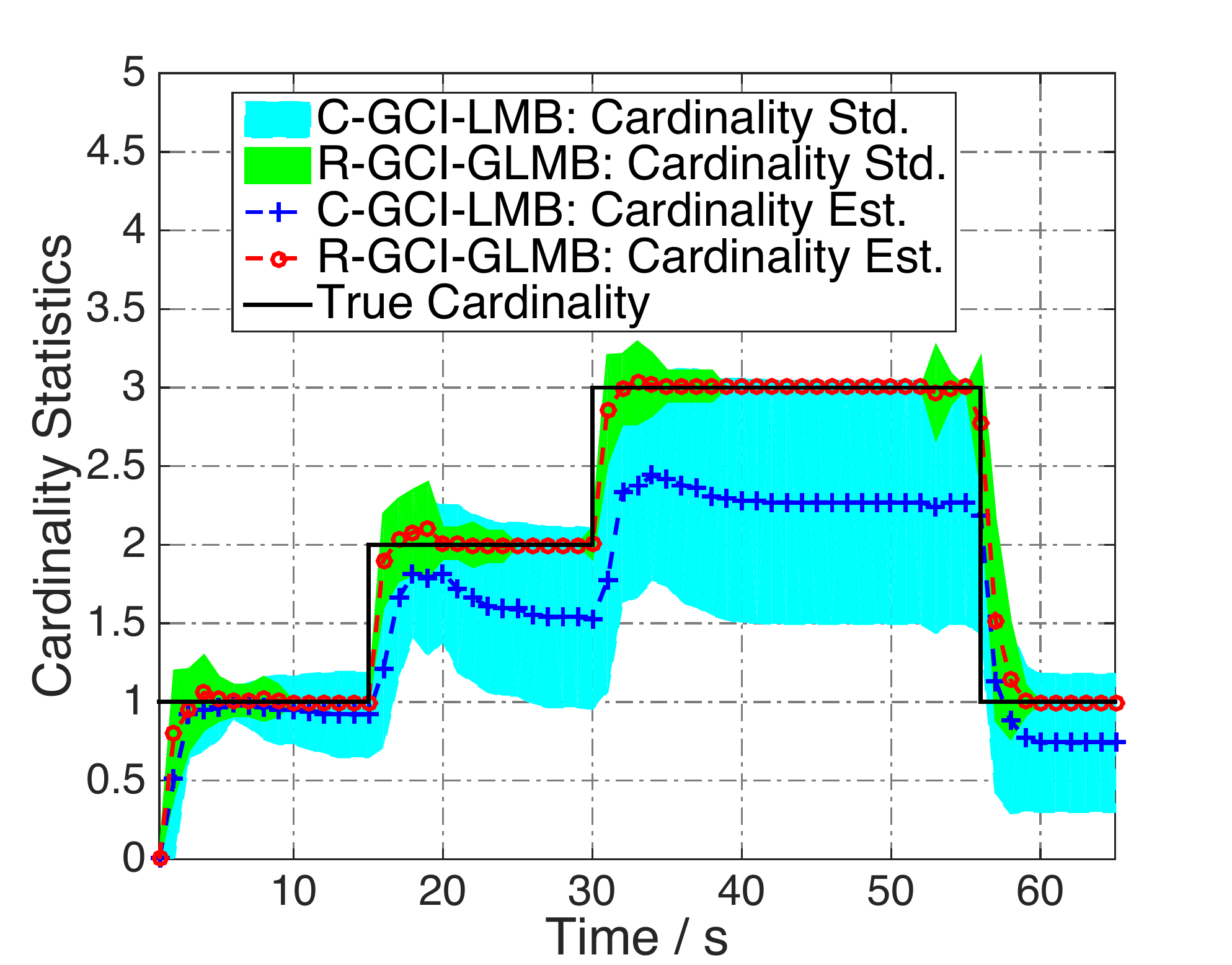}}
		\centerline{\small{(a)} }\medskip
	\end{minipage}
	\hfill
	\begin{minipage}[!htb]{0.49\linewidth}
		\centering
		\centerline{\includegraphics[width=4.95cm]{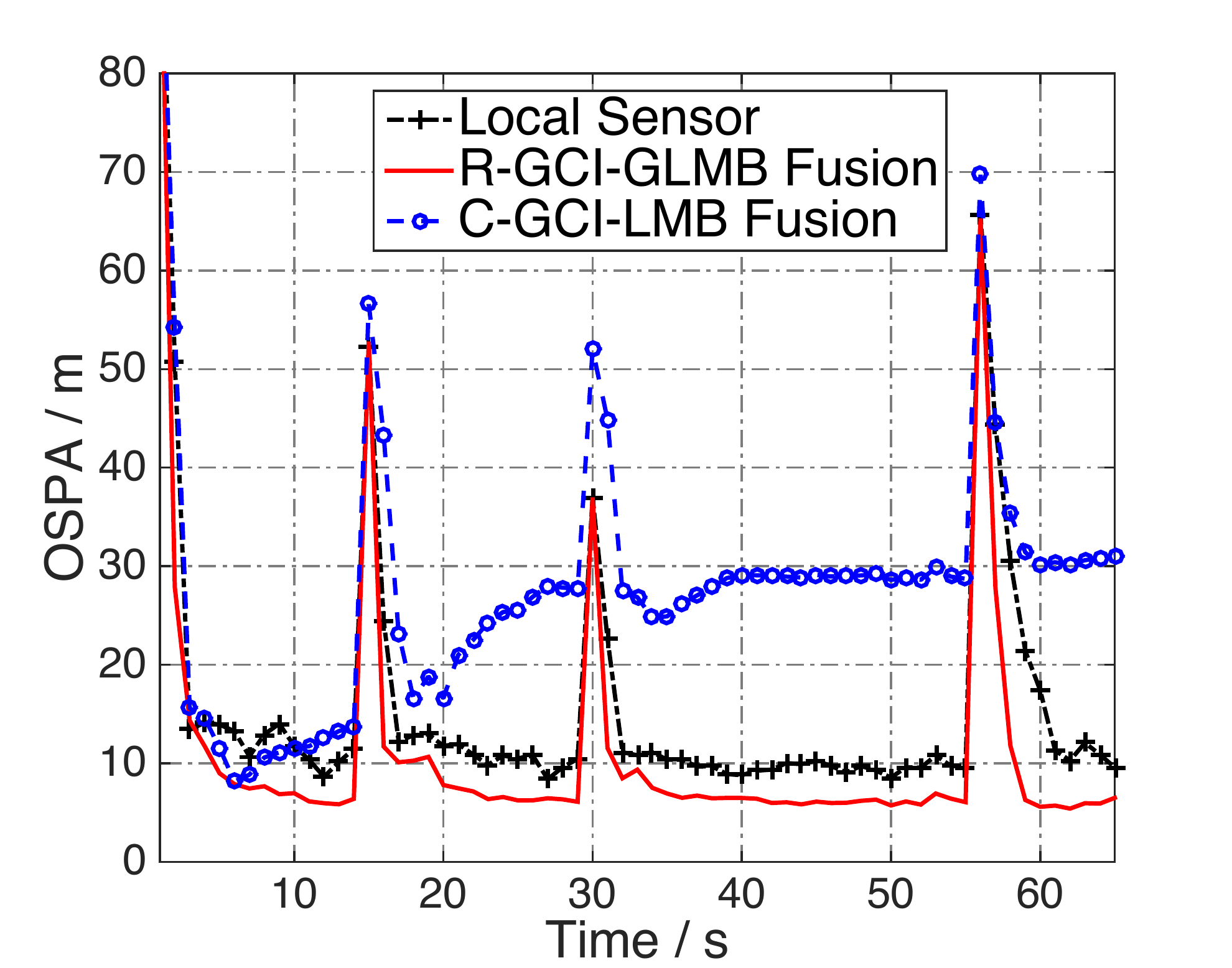}}
		\centerline{\small{(b)}}\medskip
	\end{minipage}
	\caption{Tracking performances of the local sensor filter, R-GCI-GLMB and C-GCI-LMB fusion algorithms in scenario 1 in presence of  a non-adaptive birth process designed based on prior information: (a) cardinality statistics, (b) OSPA errors. \label{fig:PBP}}
\end{figure}

\noindent\textit{\underline{Experiment 2}}:$\,$ In this experiment, the performance of R-GCI-GLMB and C-GCI-LMB fusions are compared in presence of a non-adaptive birth model that is based on prior information -- PBP model~\cite{delta_GLMB}. The comparisons in terms of the cardinality   statistics and OSPA errors between R-GCI-GLMB fusion and C-GCI-LMB fusion  are presented in Figs.~\ref{fig:PBP}(a) and (b).

It can be seen from  Fig.~\ref{fig:PBP}(a) that  cardinality estimates of  the C-GCI-LMB fusion are biased with  large standard deviations, while  the cardinality estimates returned by the R-GCI-GLMB fusion are much more accurate with  less deviations (high level of confidence).  From the results shown in Fig.~\ref{fig:PBP}(b), we observe that whenever an object appears or disappears, the OSPA errors returned by all methods sharply increase. Another observation is  while the local filter and the proposed R-GCI-GLMB fusion handle the change well (and their tracking errors gradually retract after every jump), the  classical GCI fusion does not survive the impact of a sudden change in number of objects (especially for the births) and its error increases. A third observation is that our R-GCI-GLMB fusion significantly outperforms the other methods in terms of the
 OSPA error.

%\begin{table}[!h]
%\renewcommand{\arraystretch}{1.5}
%	\caption{{\color{red}{(newly added)}} Average standard deviations (m) of OSPA Errors \label{tab_scenario_1}}
%\begin{center}
%		\footnotesize
%\begin{tabular*}{0.48\textwidth}{@{\extracolsep{\fill}}c| c c c }
%		\hline\hline
%	Algorithm & R-GCI-GLMB & C-GCI-LMB &Local Sensor \\
%		\hline
%		 Std. (ABP)& 2.7362          & 15.0260          & 9.8481           \\
%		 Std. (PBP)& 4.2579 & 23.8633 & 8.7811 \\
%		\hline
%\end{tabular*}
%\label{tabone}
%	\normalsize
%\end{center}
%\end{table} 
The above observations are in line with the result of Example 1 and the mathematical analysis  presented earlier. Each time a new object is born,  the average disparity between  the label information embedded in various labeled posteriors is enhanced because one more object may have different estimated labels in different sensors, which leads to a larger label  inconsistency indicator $d_G(\bPi)$   in turn resulting in a degraded  performance for C-GCI-LMB fusion.
%The above observations are in line with the result of Example 1 and the mathematical analysis that was presented earlier in the paper. Every sudden change in the number of targets (a birth or death) clearly leads to an increase in the level of confidence represented by conditional multi-label distributions $\varpi_s(\cdot)$. This intuitively means that these densities become in a sense ``wider'' and their products and its statistical expectation over $\mathbb{X}^n$ become smaller, hence leading to a larger gap distance $d_G$. From equation~\eqref{dG_def}, a larger gap distance means larger GCI divergence for labeled densities which in turn means a degraded performance with fusion of those densities.
%\begin{figure}
%	\begin{minipage}[!htb]{0.495\linewidth}
%		\centering
%		\centerline{\includegraphics[width=4cm]{ABP_GCI_GMB_vs_LMB_OSPA_v1.pdf}}
%		\centerline{\small{(a)} }\medskip
%	\end{minipage}
%	\hfill
%	\begin{minipage}[!htb]{0.495\linewidth}
%		\centering
%		\centerline{\includegraphics[width=4cm]{PBP_GCI_GMB_vs_LMB_OSPA_v1.pdf}}
%		\centerline{\small{(b)}}\medskip
%	\end{minipage}
%	\caption{OSPA errors of local sensor filter, R-GCI-GLMB and C-GCI-LMB fusion algorithms in scenario 1 in presence of (a) an adaptive birth process, and (b) a non-adaptive birth process designed based on prior information. \label{fig:tracks}}
%\end{figure}

\subsection{Scenario 2}
In order to further demonstrate the performance of the proposed R-GCI-GLMB fusion in challenging scenarios, a sensor network scenario with three sensors and eight objects is considered as shown in Fig.~\ref{fig:tracks2}(a).  The objects appear and disappear at different times as listed in Table~\ref{tab1}.
\begin{figure}[h!]
\begin{minipage}
{0.49\linewidth}
  \centering
%   \centerline{\includegraphics[width=4.5cm]{the_OSPA_of_two_target1.eps}}
\centerline{\includegraphics[width=4.93cm]{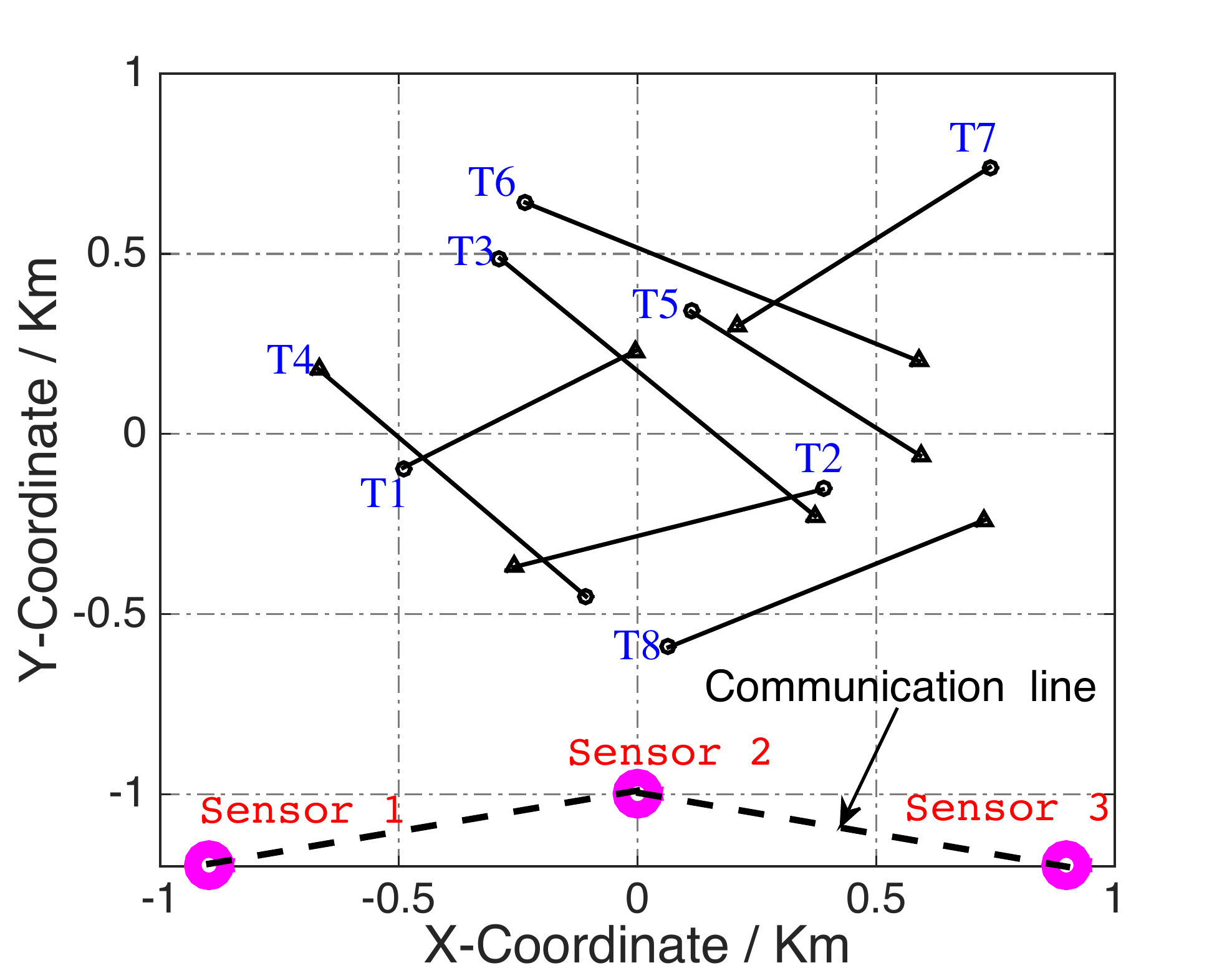}}
%  \vspace{1.5cm}
  \centerline{\small{(a)} }\medskip
\end{minipage}
\hfill
\begin{minipage}
{0.49\linewidth}
  \centering
\centerline{\includegraphics[width=4.94cm]{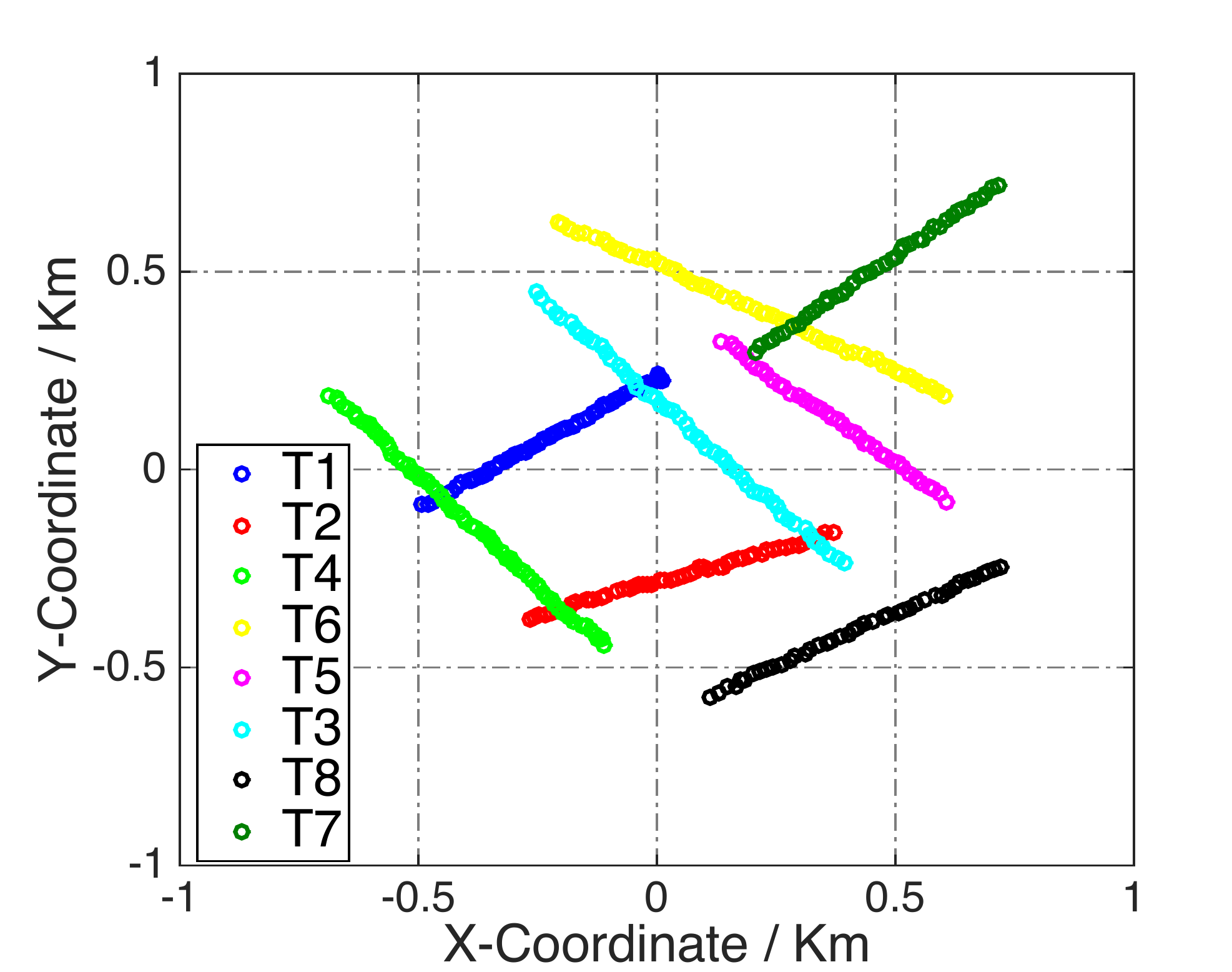}}
%  \vspace{1.5cm}
  \centerline{\small{(b)}}\medskip
\end{minipage}
\caption{(a) Scenario 2: a distributed  sensor network involving three sensors tracking eight objects on a two dimensional surveillance
region. (b) Tracking results of the R-GCI-GLMB fusion algorithm with different colors denoting different identities of objects under $P_{D,k}=0.98$.}
\label{fig:tracks2}
\end{figure}

%\vspace{-3mm}
\begin{table}[htb]
	\caption{Target birth and death times in Scenario 2.\label{tab1}}
	\centerline
	{
		\footnotesize
	\begin{tabular}{cccccc}
		\hline
		Target & Birth & Death & Target & Birth & Death \\
		\hline
		T1 & 1\,s & 56\,s & T5 & 25\,s & 66\,s \\
		T2 & 1\,s & 56\,s & T6 & 25\,s & 81\,s \\
		T3 & 10\,s & 66\,s & T7 & 56\,s & $>$100\,s \\
		T4 & 10\,s & 81\,s & T8 & 56\,s & $>$100\,s \\
		\hline
	\end{tabular} 
	\normalsize
	}
\end{table}
The performance of the R-GCI-GLMB fusion is compared to the GCI-CPHD fusion~\cite{Battistelli,Uney-2}. The CPHD filter and the LMB filter are chosen as the local filter  for GCI-CPHD fusion and  R-GCI-GLMB fusion, respectively. Since the objects appear at unknown positions, LMB filters
use an ABP introduced in~\cite{delta_GLMB} and CPHD filter  uses the adaptive birth distribution introduced in~\cite{Ristic_PHD}.  Pruning  and merging thresholds for GM implementations of  local CPHD filter and  GCI-CPHD fusion algorithms are chosen as  $\gamma_p=10^{-5}$ and  $\gamma_m=4$, respectively, and the maximum number of Gaussian components is $N_{\max}=30$. For  the GM implementation of the R-GCI-GLMB fusion, the parameters  are set to be the same as Scenario 1.  The duration of this scenario is $T= 100$\,s.

The sensors have the same detection parameters and each sensor can only exchange posteriors with its neighbour(s). Therefore, sensors 1 and 3 perform  fusion with two posteriors from sensor 2 and their local filters, and sensor 2 performs fusion with three posteriors from sensor 1, sensor 3 and the local filter by sequentially applying the pairwise fusion twice.

Fig.~\ref{fig:tracks2}(b) shows the estimated tracks returned by R-GCI-GLMB fusion for a single run under $P_{D,k}=0.98$.  It can be seen that  R-GCI-GLMB fusion performs
accurately and consistently for the entire scenario in the sense
that it maintains locking on all tracks,  estimates
object positions accurately, and  recognizes object identities correctly. Fig.~\ref{fig.ospa_cardinality}(a) presents the cardinality estimates and the corresponding standard deviations  returned by R-GCI-GLMB  fusion and GCI-CPHD fusion algorithms at sensor 2 under $P_{D,k}=0.98$. It shows that cardinality estimates given by R-GCI-GLMB fusion are more accurate with less variations (higher level of confidence) than GCI-CPHD fusion. Note that since
two objects are born at time 56\,s and two objects die at time 56\,s (as shown in Table I), the cardinality curves have a notch at  time 56\,s.

Under $P_{D,k}=0.98$, the OSPA errors for tracking results returned by the algorithms are shown in Fig.~\ref{fig.ospa_cardinality}(b). Further, we compute the corresponding standard deviations of OSPA errors and average the post-transient values over 200 MC runs and 100 time steps, and the results are provided in Table II. They demonstrate the performance difference between the R-GCI-GLMB  and GCI-CPHD fusions at sensor 2. 
%With the R-GCI-GLMB, 
%track initiations are much faster than the GCI-CPHD when new targets are born at times 10\,s, 25\,s and 56\,s. Moreover, 
OSPA errors of the R-GCI-GLMB fusion are significantly lower than the GCI-CPHD fusion with lower standard deviations after each transient. Moreover, when objects die at time 66\,s and 81\,s,   OSPA error of the R-GCI-GLMB filter  retracts to a stable value  much faster than the GCI-CPHD fusion method.  
\begin{figure}[h]
\begin{minipage}{0.49\linewidth}
  \centering
%   \centerline{\includegraphics[width=4.5cm]{the_OSPA_of_two_target1.eps}}
  \centerline{\includegraphics[width=4.94cm]{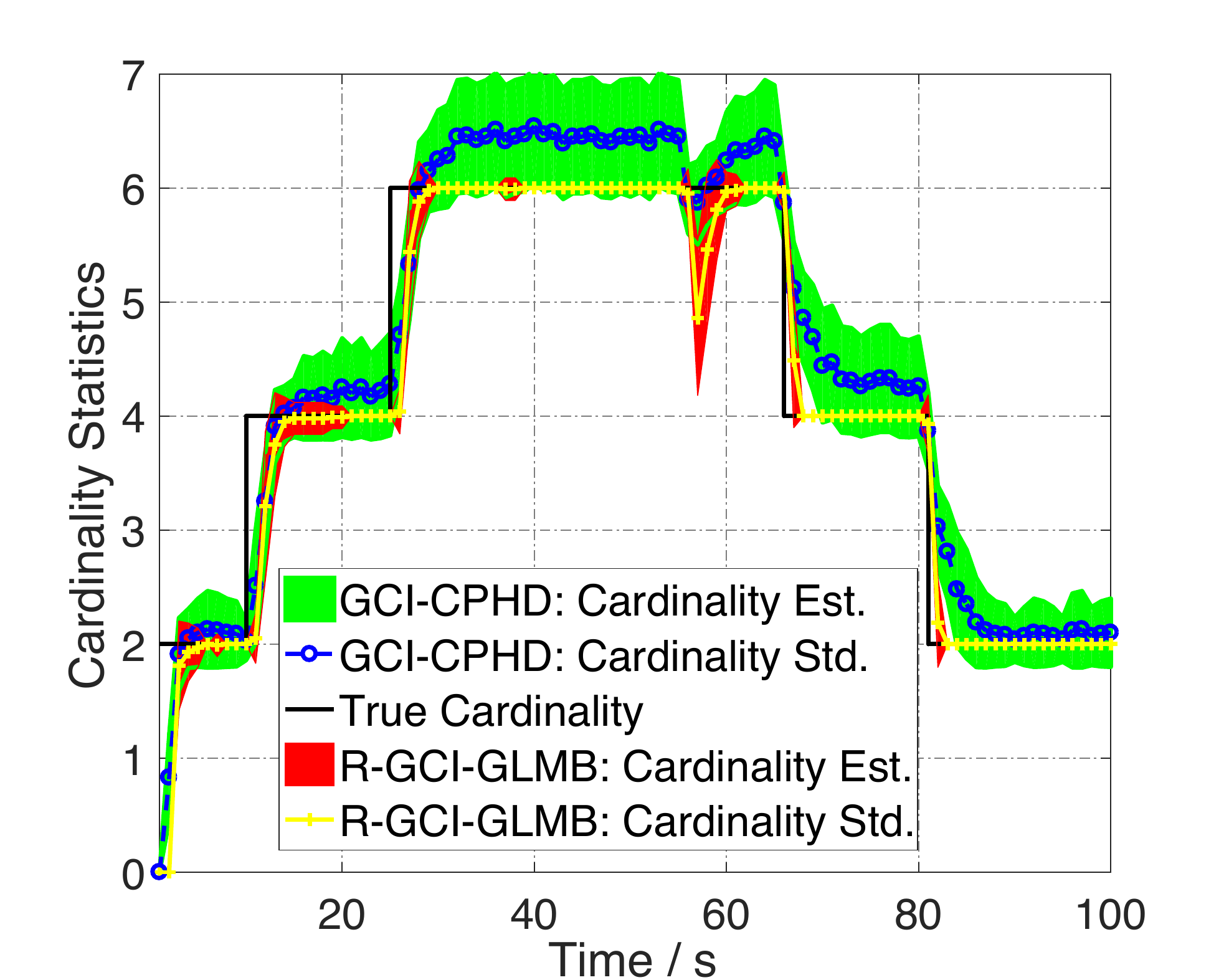}}
%  \vspace{1.5cm}
  \centerline{\small{(a)} }\medskip
\end{minipage}
\hfill
\begin{minipage}{0.49\linewidth}    \centering
  \centerline{\includegraphics[width=4.94cm]{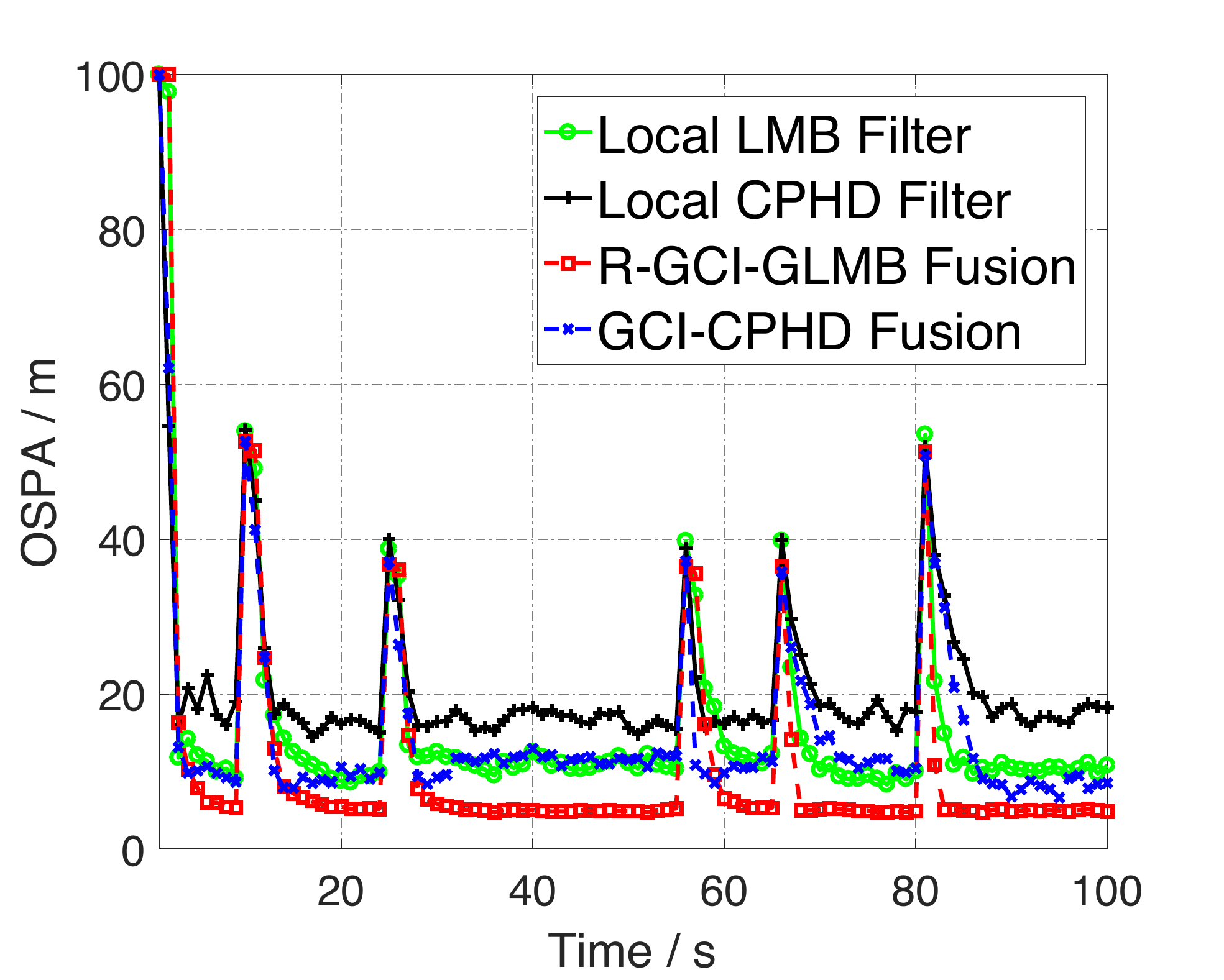}}
%  \vspace{1.5cm}
\centerline{\small{(b)}}\medskip

\end{minipage}

\caption{Tracking performances of R-GCI-GLMB and GCI-CPHD fusion algorithms under $P_{D,k}=0.98$ in Scenario 2: (a) cardinality statistics (b) OSPA errors.}
\label{fig.ospa_cardinality}
\end{figure}

\begin{table}[h]
\renewcommand{\arraystretch}{1.5}
	\caption{Average standard deviations of OSPA errors under $P_{D,k}=0.98$.\label{tab_scenario_2}}

\begin{center}
	\footnotesize
\begin{tabular*}{0.48\textwidth}{@{\extracolsep{\fill}}c| c c c c}
\hline\hline
Algorithm	&	 R-GCI-GLMB          & GCI-CPHD         &Local LMB            &Local CPHD \\
		\hline
	\text{Std. (m)}       &      1.673 &8.083     &6.838            &10.580         \\
		\hline
\end{tabular*}
\label{tabone}
	\normalsize
\end{center}
\end{table}

To assess the computational efficiency of the  algorithms, the average execution times of the R-GCI-GLMB (adopting the efficient implementation strategy) and GCI-CPHD fusions under $P_{D,k}=0.98$ are depicted in Fig. \ref{time_scenario_2}. It can be seen that the execution time of  the R-GCI-GLMB fusion is only slightly longer than the GCI-CPHD fusion with the R-GCI-GLMB fusion providing the enhanced performance (as demonstrated previously) and also automatically accounts for track labelling.

Further, we assess the performance of  R-GCI-GLMB and GCI-CPHD fusion methods under different $P_{D,k}$ values  in terms of the averaged post-transient values of OSPA errors (over 200 MC runs and 100 time steps)  as shown in Table \ref{PD_scenario_2}. Not surprisingly, while  the performances of both algorithms degrade as the $P_{D,k}$ value decreases,  the  R-GCI-GLMB fusion performs remarkably better than the GCI-CPHD fusion under each $P_{D,k}$ value with the performance difference  stable. 

%Note that the R-GCI-GLMB performs worse than GCI-PHD when objects die at 65\,s and  80\,s because the local LMB filter performs worse than the local PHD filter when handling target deaths.

\begin{figure}[h]
\centering
\includegraphics[width=6cm]{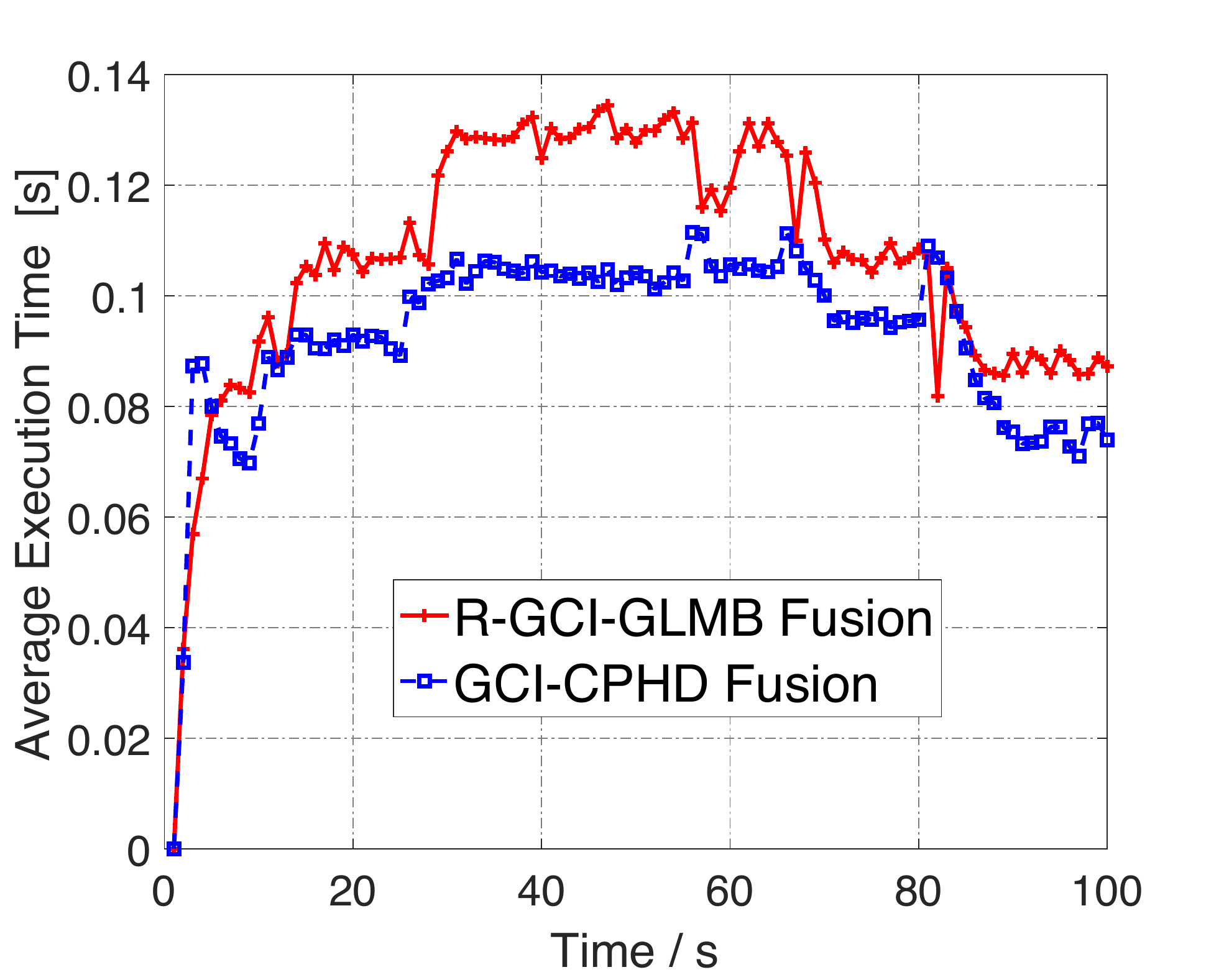}
\caption{Average execution times for R-GCI-GLMB and GCI-CPHD fusion algorithms under $P_{D,k}=0.98$ in Scenario 2.}
\label{time_scenario_2}
\end{figure}
\begin{table}[!h]
\renewcommand{\arraystretch}{1.5}
	\caption{Average OSPA errors (m) of R-GCI-GLMB and GCI-CPHD fusions under different $P_{D,k}$ values.}\label{PD_scenario_2}
\begin{center}
		\footnotesize
\begin{tabular*}{0.48\textwidth}{@{\extracolsep{\fill}}c| c c c }
		\hline\hline
%$P_{D,k}$                        & 0.99                & 0.88     &  0.78  \\
%\midrule
% R-GCI-GLMB               &  4.949               & 5.652   &  6.178
%\\
%GCI-PHD              &  12.777            & 24.489  &  29.912
%\\
% R-GCI-GLMB               &  4.949               & 5.652   &  6.178
%\\
%GCI-PHD              &  12.777            & 24.489  &  29.912
%\\
$P_{D,k}$                        & 0.98                & 0.88     &  0.78  \\
%\midrule
\hline
 R-GCI-GLMB               &  5.160              & 6.026   &  7.688
\\
GCI-CPHD              &  11.079            & 12.411  &  13.659
\\
\hline
\end{tabular*}
\label{tabone}
	\normalsize
\end{center}
\end{table}

To demonstrate how the performance advantage gained from sensor fusion increases with the number of sensors, we compute the OSPA errors returned by R-GCI-GLMB fusion and average the post-transient values over 200 MC runs and 100 time steps. Table~\ref{tabone} shows the recorded values in presence of one, two and three sensors.  The results demonstrate the  efficacy of the proposed sensor fusion algorithm in the sense that estimation accuracy improves with more sensors.
\begin{table}[!htb]
\footnotesize
\renewcommand{\arraystretch}{1.2}
\setlength{\abovecaptionskip}{1pt}
\setlength{\belowcaptionskip}{-3pt}\caption{Average OSPA Errors \textit{vs}  No. of Sensors ($P_{D,k}=0.98$)}
\begin{center}
\begin{tabular*}{0.4\textwidth}{@{\extracolsep{\fill}}c c c c}
\toprule
%&\multicolumn{3}{c}{Number of sensors}\\
%\cmidrule(r){2-4}
 Number of sensors                         & One                  &   Two   &  Three  \\
\midrule
OSPA Errors (m)                &  10.711               & 5.948     &  5.160
\\
%OSPA of MS-PHD (m)                         &  3.1242              &    0.9530  &   0.799\\
\bottomrule
\end{tabular*}
\end{center}
\label{tabone}
\end{table}
These results demonstrate that the two-step approximation used in the derivation of GCI fusion with GMB distributions is reasonable, and the significant enhancement in performance (in terms of OSPA errors) also verifies the robustness and effectiveness of the R-GCI-GLMB fusion devised and presented in this work. 
\section{Conclusion}\label{chp:7}
In this paper, we addressed the problem of distributed multi-object tracking with labeled set filters based on generalized Covariance Intersection (GCI). Firstly, we showed that the performance of GCI fusion with labeled multi-object densities is highly sensitive to inconsistencies between label information from the local labeled posteriors. We provided a mathematical analysis from the perspective of Principle of Minimum Discrimination Information and yes-object probability. Secondly, inspired by the analysis, a novel and general solution was proposed for distributed fusion of labeled multi-object posteriors that is robust to label inconsistencies between different sensor node posteriors.  Thirdly, for the case of fusing generalized labeled multi-Bernoulli (GLMB) filter family  including the GLMB, $\delta$-GLMB, marginalized $\delta$-GLMB and labeled multi-Bernoulli (LMB) filters, we formulated the robust fusion solution. Simulation results for Gaussian mixture (GM) implementation demonstrated the  robustness and effectiveness of the proposed fusion algorithms in  challenging tracking scenarios.
\appendices
\section{Proof of Proposition 2}
According to (\ref{unlabel-marginal}), the unlabeled version of a GLMB density of form (\ref{GLMB}) is distributed according to

\begin{equation}\label{label2unlabel-GLMB}
{\small{
\begin{split}
\!\!\!&\!\!\!\pi(\!\{x_1,\!\cdots\!,x_n\}\!)\!=\!\!\!\sum_{(\ell_1,\cdots,\ell_n)\in\mathbb{L}^n}\!\!\bpi(\!\{(x_1,\ell_1),\!\cdots\!,(x_n,\ell_n)\}\!)\\
\!\!=&\!\!\sum_{(\ell_1,\cdots,\ell_n)\in\mathbb{L}^n}\sum_{c\in\mathbb{C}}w^{(c)}(\{\ell_1,\cdots,\ell_n\})\prod_{i=1}^{n}p^{(c)}(x_{i},\ell_i)\\
\!\!=&\sum_{\sigma}\sum_{I\in\mathcal{F}_n(\mathbb{L})}\sum_{c\in\mathbb{C}}w^{(c)}(I)\prod_{i=1}^{n}p^{(c)}(x_{\sigma(i)},I^v(i)).
\end{split}}}
\end{equation}
where $\sigma$ denotes one permutation of $I$, $\sigma(i)$ denotes the $i$th element of the permutation, and $I^v$ denotes a vector constructed by sorting the elements of the set $I$.

Let 
\begin{equation}\label{para-GMB-GLMB}
\begin{split}
w^{(I,c)}&\triangleq w^{(c)}(I), I\in\mathcal{F}(\mathbb{L})\\
p^{(c),\ell}(x)&\triangleq p^{(c)}(x,\ell), \ell\in\mathbb{L}.
\end{split}
\end{equation}
Equation (\ref{label2unlabel-GLMB}) can be further represented as
\begin{equation}\label{GMB-GLMB}
{\small{\begin{split}\notag
\pi(\{&x_1,\cdots,x_n\})\!=\!\!\sum_{\sigma}\!\sum_{(I,c)\in\mathcal{F}_n(\mathbb{L})\times\mathbb{C}}\!\!w^{(I,c)}{\prod}_{i=1}^{n}p^{(c),I^v(i)}(x_{\sigma(i)})
\end{split}}}
\end{equation}
\section{Proof of Proposition 3}
According to (\ref{unlabel-marginal}), the unlabeled version of an LMB density of form (\ref{LMB}) is distributed according to
\begin{equation}\label{label2unlabel-LMB}
{\small{\begin{split}
\!\!\!&\!\!\!\pi(\{x_1,\!\cdots\!,x_n\})\!=\!\!\!\sum_{(\ell_1,\cdots,\ell_n)\in\mathbb{L}^n}\bpi(\!\{(x_1,\ell_1),\!\cdots\!,(x_n,\ell_n)\}\!)\\
\!\!\!=&\!\sum_{(\ell_1,\cdots,\ell_n)\in\mathbb{L}^n}w(\{\ell_1,\cdots,\ell_n\})\prod_{i=1}^{n}p(x_{i},\ell_i)\\
\!\!\!=&\sum_{\sigma}\sum_{I\in\mathcal{F}_n(\mathbb{L})}w(I)\prod_{i=1}^{n}p(x_{\sigma(i)},I^v(i)).
\end{split}}}
\end{equation}
Let
\begin{equation}\label{para-GMB-GLMB}
\begin{split}
w^{(I)}&\triangleq w^{(I)}, I\in\mathcal{F}(\mathbb{L})\\
p^{(c),\ell}(x)&\triangleq p^{(\ell)}(x), \ell\in\mathbb{L}.
\end{split}
\end{equation}
Equation (\ref{label2unlabel-LMB}) can be further represented as
\begin{equation}\label{GMB-GLMB}
{\small{\begin{split}
  \pi(\{&x_1,\cdots,x_n\})=\sum_{\sigma}\sum_{I\in\mathcal{F}_n(\mathbb{L})}w^{(I)}\prod_{i=1}^{n}p^{(I^v(i))}(x_{\sigma(i)})
\end{split}\notag}}
\end{equation}
\section{Proof of Proposition 4}
Combination of (\ref{fuse-1}) and (\ref{app-fused-label}) yields  
\begin{equation}\label{label-density}
{\small{\begin{split}
&\!\!\!\bpi_\omega(\{(x_1,\ell_1),\cdots,(x_n,\ell_n)\})\!=\!\\
%&=\frac{[\widetilde w(\{\ell_1,\cdots,\ell_n\})]^{\omega_1} \prod_{i=1}^n [p(x_i,\ell_i)]^{\omega_1}}{\sum_{(\ell_1,\cdots,\ell_n)\in\mathbb{L}^n}[\widetilde w(\{\ell_1,\cdots,\ell_n\})]^{\omega_1}\widetilde \prod_{i=1}^n [p(x_i,\ell_i)]^{\omega_1}}\cdot \\
%&\,\,\,\,\,\,\,\frac{\prod_{s={1,2}}[\widetilde \pi_s(\{x_1,\cdots,x_n\})]^{\omega_s}}{\int \prod_{s={1,2}}[\pi_\omega(X)]^{\omega_s}\delta X}\\
&\!\!\!\!\frac{\left[\overline w_{1}(\{\ell_1,\!\cdots\!,\ell_n\})\right]^{\omega_1}\prod_{i=1}^n\left[\overline p_{1}^{(\ell_i)}(x_i)\right]^{\omega_1}}{\sum_{(\ell_1,\cdots,\ell_n)\in\mathbb{L}_1^n} \left[\overline w_{1}(\{\ell_1,\!\cdots\!,\ell_n\})\right]^{\omega_1}\prod_{i=1}^n\left[\overline p_{1}^{(\ell_i)}(x_i)\right]^{\omega_1}} \times \\
&\!\!\!\!\frac{1}{C}\prod_{s=1,2} {\sum}_{\sigma_s}{\sum}_{\mathcal{I}_s\in\! \mathcal{F}_n(\mathbb{I}_s)}{\left[\widetilde w_s^{(\mathcal{I}_s)}\right]}^{\omega_s}{\left[{\prod}_{i=1}^{n}\widetilde p_s^{(\mathcal{I}_s^v(i))}\!(x_{i})\right]}^{\omega_s}.\\
%&=\frac{1}{C}[\widetilde w(\{\ell_1,\cdots,\ell_n\})]^{\omega_1} \prod_{i=1}^n [p(x_i,\ell_i)]^{\omega_1}\cdot\\
%&\,\,\,\,\,\,\sum_{\sigma_2}\sum_{\mathcal{I}_2\in\mathcal{F}_n(\mathbb{I}_2)}{\left(Q^{\mathcal{I}_2}\right)}^{\omega_2}{\left(\prod_{i=1}^{n}p_2^{\mathcal{I}_2^v(i)}\!(x_{i})\right)}^{\omega_2}\\
%&=\sum_{\tau\in \mathcal{T}(\{\ell_1,\cdots,\ell_n\})} w_\omega^{(\{\ell_1,\cdots,\ell_n\},\tau)}\prod_{i=1}^n p^{(\tau),\ell_i}(x_i)
\end{split}}}
\end{equation}
As shown in Proposition 2, the unlabeled version of GLMB density in (\ref{GLMB-2}) is a GMB density with $\mathbb{I}_{s_0}=\mathbb{L}_{s_0}$ and $\Phi_{s_0}=\mathbb{C}_{s_0}$. Hence,  (\ref{r_s0}) and (\ref{p_s0}) (with $s_0=1$) can be   rewritten as

\begin{align}
\label{-r}\overline r^{(\ell)}_1&=\sum_{\mathcal{I}\in\mathcal{F}(\mathbb{I}_1)}\sum_{\phi\in\Phi_1}1_\mathcal{I}(\ell)w_1^{(\mathcal{I},\phi)}=\widetilde r_1^{(\ell)}
\end{align}

\begin{align}
\label{-p}\!\!\overline p^{(\ell)}_1(x)&\!=\!\frac{1}{\widetilde{r}_1^{(\ell)}}\!\sum_{\mathcal{I}\in\mathcal{F}(\mathbb{I}_1\!)}\!\sum_{\phi\in\Phi_1}\!1_\mathcal{I}(\ell)w_1^{(\mathcal{I},\phi)}
p_1^{(\phi),\ell}(x)\!=\!\widetilde p_1^{(\ell)}(x)
\end{align}
where $\widetilde r_1^{(\ell)}$ and $\widetilde p_1^{(\ell)}(x)$ are shown in (\ref{r MB}) and (\ref{p MB}) respectively.
As a result, we have the following equality,
\begin{equation}\label{equality}
\begin{split}
&\!\sum_{(\ell_1,\cdots,\ell_n)\in\mathbb{L}_1^n} \!\!\left[\overline w_{1}(\{\ell_1,\cdots,\ell_n\})\right]^{\omega_1}\!{\prod}_{i=1}^n\left[\overline p_{1}^{(\ell_i)}(x_i)\right]^{\omega_1}\!\!=\\
&\!{\sum}_{\sigma}{\sum}_{\mathcal{I}_1\in\! \mathcal{F}_n(\mathbb{I}_1)}{\left[\widetilde w_1^{(\mathcal{I}_1)}\right]}^{\omega_1}{{\prod}_{i=1}^{n}\left[\widetilde p_1^{(\mathcal{I}_1^v(i))}\!(x_{\sigma({i})})\right]}^{\omega_1}\end{split}
\end{equation}
Substitution of (\ref{-r}), (\ref{-p}) and (\ref{equality}) into (\ref{label-density}), we have
\begin{equation}\label{label-density-2}
\begin{split}
&\bpi_\omega(\{(x_1,\ell_1),\cdots,(x_n,\ell_n)\})\\
=&\frac{1}{C}[\widetilde w_1^{(\{\ell_1,\cdots,\ell_n\})}]^{\omega_1} {\prod}_{i=1}^n [\widetilde p_1^{(\ell_i)}(x_i)]^{\omega_1}\times\\
&{\sum}_{\sigma}{\sum}_{\mathcal{I}_2\in\mathcal{F}_n(\mathbb{I}_2)}{\left[\widetilde w_2^{(\mathcal{I}_2)}\right]}^{\omega_2}{{\prod}_{i=1}^{n}\!\left[\widetilde p_2^{(\mathcal{I}_2^v(i))}(x_{\sigma({i})})\right]}^{\omega_2}\\
%&=\sum_{\tau\in \mathcal{T}(\{\ell_1,\cdots,\ell_n\})} w_\omega^{(\{\ell_1,\cdots,\ell_n\},\tau)}\prod_{i=1}^n p^{(\tau),\ell_i}(x_i)
\end{split}
\end{equation}
Substitution of  (\ref{fuse-p}), (\ref{Z}), (\ref{fuse-w}) and (\ref{fuse-w-non}) into (\ref{label-density-2}), and utilizing Definition 5, we can obtain
\begin{small}
\begin{equation}
\begin{split}
&\bpi_\omega(\{(x_1,\ell_1),\cdots,(x_n,\ell_n)\})\\
=&{\sum}_{\tau\in \mathcal{T}(\{\ell_1,\cdots,\ell_n\})}\frac{1}{C} [\widetilde w_1^{(\{\ell_1,\cdots,\ell_n\})}]^{\omega_1} [\widetilde w_2^{(\tau(\{\ell_1,\cdots,\ell_n\})}]^{\omega_2} \times
\\
&{\prod}_{i=1}^n \eta^{(\tau),\ell_i}_\omega  {\prod}_{i=1}^n p_\omega^{(\tau),\ell_i}(x_i)\\
=&{\sum}_{\tau\in \mathcal{T}(\{\ell_1,\cdots,\ell_n\})}   w_\omega^{(\{\ell_1,\cdots,\ell_n\},\tau)} {\prod}_{i=1}^n p_\omega^{(\tau),\ell_i}(x_i)
\end{split}
\end{equation}
\end{small}
Using the definition of $p_\omega^{(\tau)}(\cdot,\ell)$ in (\ref{p_identity}) and $w_\omega^{(\tau)}(I)$ in (\ref{w_identity}), we can obtain (\ref{fused GLMB}).
\bibliographystyle{IEEEtran}
\bibliography{GCI_GMB}
\begin{IEEEbiography}
[{\includegraphics[width=0.8\columnwidth,draft=false]{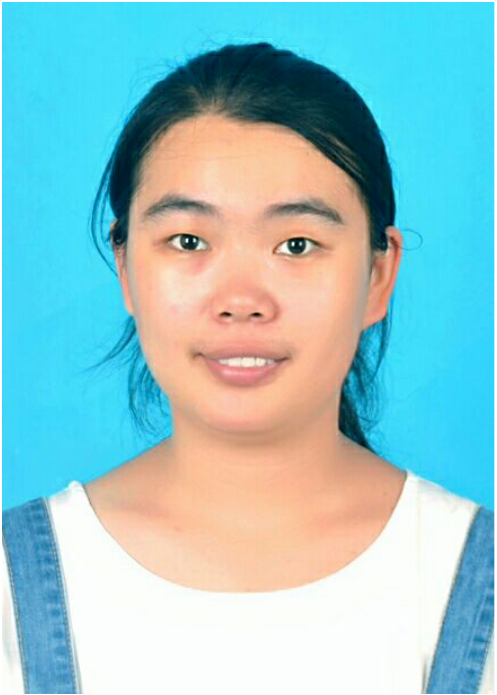}}]{Suqi Li}  is born in 1990.  She received the B.E. degree in electronic engineering from the University of Electronic Science and Technology of China, Chengdu, in 2011. 
Since September 2011, she has been pursuing the Ph.D. degree at the School of Electronic Engineering, University of Electronic Technology and Science of China.  
Currently, she is a visiting student with the Department of Information Engineering, University of Florence, Italy. Her research interests include random finite set, multi-target tracking, nonlinear filtering. 
\end{IEEEbiography}
 \begin{IEEEbiography}
[{\includegraphics[width=0.85\columnwidth,draft=false]{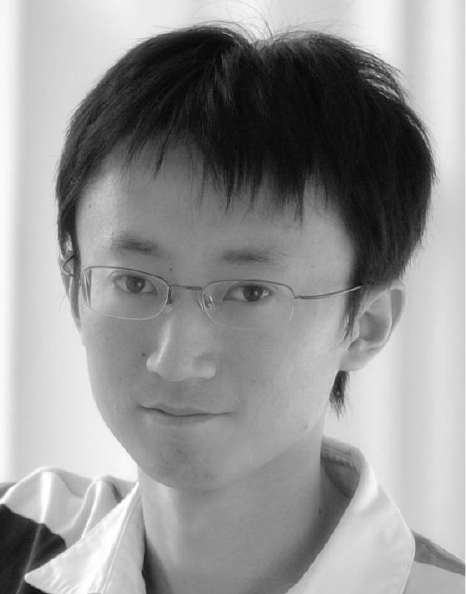}}]{Wei Yi}
 received the B.E. degree in electronic engineering from the University of Electronic Science and Technology of China, Chengdu, in 2006.

Since 2007, he has been pursuing the Ph.D. degree at the School of Electronic Engineering of the University of Electronic Technology and Science of China. 

From March 2010 to February 2012, he was a visiting student in the Melbourne Systems Laboratory, University of Melbourne, Australia. His research interests include particle filtering and target tracking (particular emphasis on multiple target tracking and track-before-detect techniques).

Mr. Yi received the ``Best Student Paper Award'' at the 2012 IEEE Radar Conference, Atlanta, United States and the ``Best Student Paper Award'' at the 15th International Conference on Information Fusion, Singapore, 2012.
\end{IEEEbiography}
 \begin{IEEEbiography}
[{\includegraphics[width=0.8\columnwidth,draft=false]{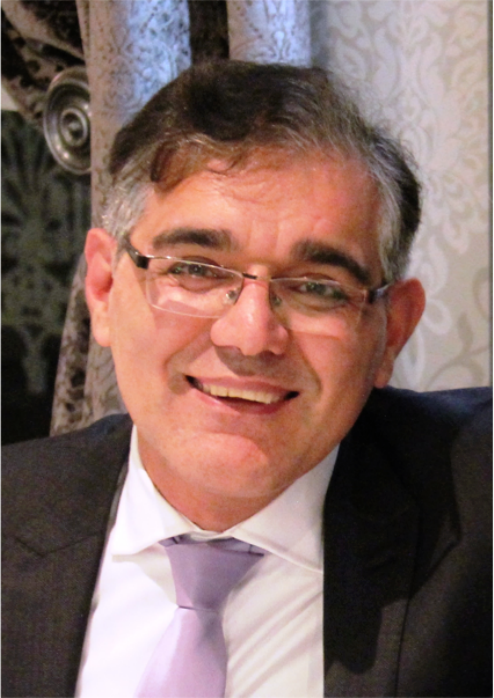}}]{Reza Hoseinnezhad}
received his B.Sc., M.Sc. and Ph.D. degrees in Electronic, Control and Electrical Engineering all from the University of Tehran, Iran, in 1994, 1996 and 2002, respectively. Since 2002, he has held various academic positions at the University of Tehran, Swinburne University of Technology, the University of Melbourne and RMIT University. He is currently a senior lecturer with the School of Aerospace, Mechanical and Manufacturing Engineering, RMIT University, Victoria, Australia. His research is currently focused on development of robust estimation and visual tracking methods in a point process framework.
\end{IEEEbiography}
\begin{IEEEbiography}
[{\includegraphics[width=0.8\columnwidth,draft=false]{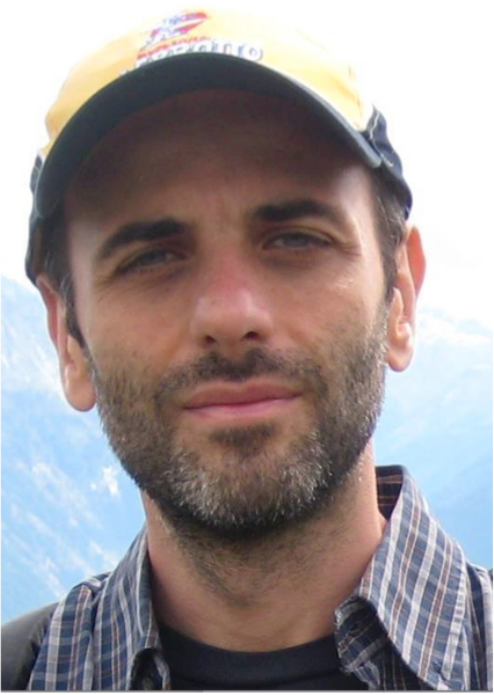}}]{Giorgio Battistelli}
Giorgio Battistelli received the Laurea degree in electronic engineering and the Ph.D. degree in robotics from the University of Genoa, Genoa, Italy, in 2000 and 2004, respectively.
From 2004 to 2006, he was a Research Associate with the Dipartimento di Informatica, Sistemistica e Telematica, University of Genoa. Since 2006, he has been with the University of Florence, Florence, Italy, where he is currently an Associate Professor of automatic control with the Dipartimento di Ingegneria dell’Informazione. His current research interests  include adaptive and learning systems, real-time control reconfiguration, linear and nonlinear estimation, hybrid systems, sensor networks, and data fusion. 
Dr. Battistelli was a member of the editorial boards of the IFAC Journal Engineering Applications of Artificial Intelligence and of the IEEE Transactions on Neural Networks and Learning Systems.
He is currently an Associate Editor of the IFAC Journal Nonlinear Analysis: Hybrid Systems, and a member of the conference editorial boards of IEEE Control Systems Society and the European Control Association.
\end{IEEEbiography}
\begin{IEEEbiography}
[{\includegraphics[width=0.8\columnwidth,draft=false]{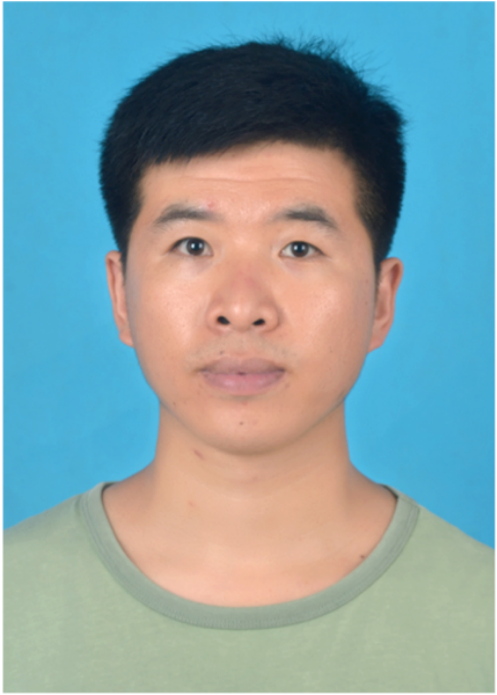}}]{Bailu Wang} received his B.S. degree from the University of Electronic Science and
Technology of China (UESTC) in 2011. He is now working toward his Ph.D. degree
on signal and information processing at UESTC.

From August  2016, he has been a visiting student at University
of Florence, Florence,  Italy. His current research interests include
radar and statistical signal processing, and multi-sensor multi-target fusion.
\end{IEEEbiography}
\begin{IEEEbiography}
[{\includegraphics[width=0.8\columnwidth,draft=false]
{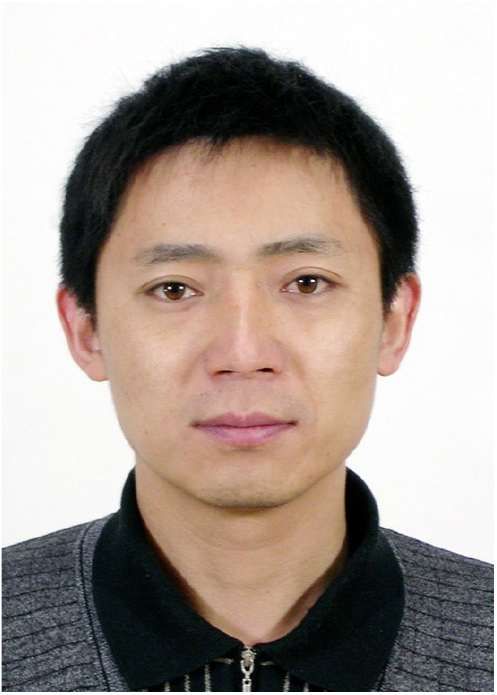}}]{Lingjiang Kong} was born in 1974. He received the B.S., M.S., and Ph.D. degrees from the
University of Electronic Science and Technology of China (UESTC) in 1997, 2000
and 2003, respectively.

From September 2009 to March 2010, he was a visiting researcher with the
University of Florida. 

He is currently a professor with the School of
Electronic Engineering, University of Electronic Science and Technology of
China (UESTC). His research interests include multiple-input multiple-output
(MIMO) radar, through the wall radar, and statistical signal processing.
\end{IEEEbiography}

\end{document}